\newtheorem{proposition}{\hspace{0pt}\bf Proposition}
\newtheorem{theorem}{\hspace{0pt}\bf Theorem}
\newtheorem{corollary}{\hspace{0pt}\bf Corollary}
\newtheorem{remark}{\hspace{0pt}\bf Remark}
\def\vec{\operatorname{vec}}
\begin{document}

\title{A Generalization of the Convolution Theorem and its Connections to Non-Stationarity and the Graph Frequency Domain}

\author{Alberto Natali,~\IEEEmembership{Student Member,~IEEE},  Geert Leus,~\IEEEmembership{Fellow,~IEEE}
\thanks{The authors are with the Faculty of Electrical Engineering, Mathematics and Computer Science, Delft University of Technology, 2628 Delft, The Netherlands. Emails: \{a.natali, g.j.t.leus\}@tudelft.nl. This work is part of the GraSPA project (project 19497 within the TTW OTP programme), which is financed by the Netherlands Organization for Scientific Research (NWO).}
\thanks{A preliminary version of this work appeared in~\cite{natali2022general}.}
}

\markboth{Journal of \LaTeX\ Class Files,~Vol.~14, No.~8, August~2021}%
{Shell \MakeLowercase{\textit{et al.}}: A Sample Article Using IEEEtran.cls for IEEE Journals}


\maketitle
\begin{abstract}
In this paper, we present a novel convolution theorem which encompasses the well known convolution theorem in (graph) signal processing as well as the one related to time-varying filters. Specifically, we show how a node-wise convolution for signals supported on a graph  can be expressed as another node-wise convolution in a frequency domain graph, different from the original graph. This is achieved through a parameterization of the filter coefficients following a basis expansion model. After showing how the presented theorem is consistent with the already existing body of literature, we discuss its implications in terms of non-stationarity. Finally, we propose a data-driven algorithm based on subspace fitting to learn the frequency domain graph, which is then corroborated by experimental results on synthetic and real data. 
\end{abstract}

\begin{IEEEkeywords}
graph signal processing, convolution, non-stationarity, frequency domain
\end{IEEEkeywords}


\section{Introduction}
\label{sec:introduction}

 \IEEEPARstart{C}{onvolution} is the core operation in signal processing and machine learning systems. Its use is at the heart of digital 
filters~\cite{oppenheim2001discrete} for audio applications and time series prediction~\cite{lutkepohl2013introduction}, as well as for  convolutional neural networks in deep learning~\cite{lecun2015deep}, enabling scalable architectures and endowing a notion of locality among samples, properties exploited in, e.g., object recognition\cite{lecun1995convolutional}.

 \sn 
 The three key operations defining a convolution are the \textit{shift}, the \textit{scale} and the \textit{sum}. The shift is responsible to capture the underlying signal domain and brings the notion of locality and proximity among samples: in time, for instance, successive applications of the shift (in that case corresponding to a delay) give previous time samples. The scale defines how the shifted samples are weighted before summing them, and different weighting schemes lead to different structural properties (such as invariants) of the architecture implementing the convolution. The notion of regularity in time and in space, which are two very well structured domains, is reflected in the definition of their frequency domain. Specifically, a signal in these domains can be decomposed into elementary building blocks (such as sine waves) which endow a physical interpretation with a well understood meaning of variability. In a less structured domain modeled by a graph, this definition is not tight and multiple interpretations are possible. 

\sn
 Graph signal processing (GSP)~\cite{shuman2013emerging} extends the convolution principle to data residing on graphs by means of graph filters (GFs)~\cite{segarra2017optimal, shuman2018distributed}, architectures which are parametric on the mathematical structure defining the \textit{shift} operation. While in temporal signal processing this shift operator is mathematically represented by the (lower) shift/delay matrix, in GSP the \textit{graph shift operator} (GSO) depends on the underlying network domain. The eigendecomposition of the GSO reveals its respective frequencies: specifically, the eigenvalues of the shift are the frequencies. In temporal signal processing these frequencies are the well-known complex roots of unity which obey a natural ordering, and the associated (normalized) eigenvectors are the Fourier modes (remember that the delay matrix is a particular case of a circulant matrix). In GSP, however, different shifts have different  eigenvalues and hence different frequencies. In this case, an ordering purely based on their numeric value might not be meaningful and a more structured domain, for instance captured by a graph, might convey more information. This is the recent line of work explored in~\cite{leus2021dual,shi2022graph}, which we exploit here.


\sn
By relying on the novel notion of \textit{dual graph}~\cite{leus2021dual}, which models the support of the frequency domain as a graph, in this work we introduce a new convolution theorem which generalizes the (graph) convolution theorem and the one related to time-varying filters. To do this, we adopt so-called node-varying graph filters (NV-GFs)~\cite{segarra2017optimal} and we show how a node-varying convolution in one domain (captured by the \textit{primal} graph) can be expressed as a node-variant convolution in the other domain 
 (captured by the \textit{dual} graph), while remaining consistent with the pre-existing body of literature. Based on the proposed theorem, we outline models for non-stationary graph signals. Finally, we propose an algorithmic approach to learn the dual graph from data with a subspace fitting approach~\cite{viberg1991sensor} resorting to sequential convex programming~\cite{boyd2008sequential} to tackle the non-convexity of the problem. The validity of this approach is finally corroborated by simulations on synthetic and real data.

\subsection{Related Works}
 Although this work is novel in its genre, many of the concepts it relies on have been recently introduced. Modeling the frequency domain through a graph has been proposed in~\cite{leus2021dual, shi2022graph}, and similar efforts to interpret such a domain differently from ordering the eigenvalues of the GSO is given by means of embeddings in~\cite{saito2018can, cloninger2021natural}. Node-variant graph filters as a way to extend classical time-varying filters to the graph setting have been introduced in~\cite{segarra2017optimal}, while graph signal  stationarity arguments and their implications have been studied in~\cite{marques2017stationary, perraudin2017stationary}. Non-stationarity of random graph signals has been exploited in~\cite{shafipour2021identifying} in the context of network topology identification.

 \vspace{-.1cm}
\subsection{Contributions}
We summarize the specific contributions of this work as follows:
\begin{enumerate}
    \item We propose a convolution theorem which encompasses the (graph) convolution theorem and the one related to time-varying filters; we show how the latter are specific instantiations of the proposed theorem for particular choices of the GSO and the scaling scheme;
    \item We introduce novel non-stationary graph signal models;
    \item We devise an algorithmic procedure to learn the dual graph from input-output data. The problem formulation can be casted as a blind polynomial regression, as such also applicable to graph-agnostic tasks, such as polynomial interpolations and jitter correction in communication applications. The solution approach relies on a subspace fitting method and it is accompanied by a theoretical study of the ambiguities present in the problem.
    \item We showcase the validity of our findings on synthetic and real data with numerical simulations. 
\end{enumerate}

\section{Preliminaries}
\label{sec:prelims}



\textbf{Graphs and Graph Signals.} We consider data residing on a non-Euclidean domain, which we formally model by a graph $\ccalG=\left(\ccalV,\ccalE, \bbS \right)$, where $\ccalV=\{1, \ldots, N\}$ is the set of nodes (or vertices),  $\ccalE \subseteq \mathcal{V} \times \mathcal{V}$ is the set of edges, and $\bbS$ is an $N \times N$ matrix that represents the graph structure. The matrix $\bbS$ is called the graph shift operator (GSO), since it plays a role akin to the delay operator in temporal signal processing. Specifically, its entries $\left[\bbS\right]_{ij} \in \complex$ for $i \!\neq \! j$ are different from zero only if nodes $i$ and $j$ are connected by an edge;  typical examples of such a  matrix are the (weighted) adjacency matrix $\bbW$~\cite{sandryhaila2013discrete} and the graph Laplacian $\bbL$~\cite{shuman2013emerging}. A graph signal is then the vector $\bbx \in \complex^N$, where $x_i : \ccalV \to \complex$ is the value collected at node $i$.
 
 In this manuscript, for the sake of simplicity, we consider the shift operator $\bbS$ to admit an eigenvalue decomposition (EVD) written as $\bbS=\bbV\bbLambda\bbV^{-1} $, with $\bbV$ an invertible matrix collecting the eigenvectors and $\bbLambda= \Diag(\bblambda)$ a diagonal matrix collecting the eigenvalues $\bblambda$ of $\bbS$. A fundamental assumption in GSP is that the matrix $\bbV$ provides a basis for expressing signals living on $\bbS$, and with favorable discrete Fourier transform (DFT)-like properties providing a notion of frequency  similar to the one in temporal signal processing. For this reason, the matrix $\bbV^{-1}$ is often referred to as the graph Fourier transform (GFT) and the projection of $\bbx$ onto this basis, i.e.,  $\hat{\bbx}= \bbV^{-1} \bbx$ as the GFT signal.

\sn
\textbf{Filtering on Graphs.}
Given a graph $\bbS$, a classical graph filter (C-GF) of order $L-1$ is the matrix polynomial:
\begin{align}
\label{eq:node-invariant}
    \bbH(\bbp, \bbS)= \sum_{l=0}^{L-1} p_l \bbS^l,
\end{align}
where $\bbp= [p_0, \ldots, p_{L-1}]^\top \in \complex^L$ collects  the graph filter coefficients (taps). The application of the filter $\bbH(\bbp, \bbS)$ on a signal $\bbx$ to obtain a new signal $\bby$, i.e., $\bby=\bbH(\bbp, \bbS)\bbx$, is often referred to as graph filtering or \textit{graph convolution}, as it respects the scale-sum-shift principle of convolution. With a few simple calculations, it is  easy to show that in the (graph) frequency domain, a graph convolution is expressed as a pointwise multiplication; this is the (graph) convolution theorem, which can be expressed as follows:
\begin{tcolorbox}
\vspace{-.4cm}
\begin{align}
\label{eq:graph-convolution-theorem}
    \bby &= \sum_{l=0}^{L-1} p_l\bbS^l \bbx  \quad  \quad  \hat{\bby}=\sum_{l=0}^{L-1} p_l\bbLambda^l \hat{\bbx}
 \end{align}
\end{tcolorbox}
\noindent with $\hat{\bby}=\bbV^{-1}\bby$ the GFT of $\bby$. Notice that such filter is isotropic, meaning that for each $l=0, \ldots, L-1$,  the filter coefficient $p_l$ is shared among all the nodes of the shifted signal $\bbS^l\bbx$; for this reason a C-GF is an example of a \textit{node-invariant} graph filter.

\sn
A more versatile and flexible version  of~\eqref{eq:node-invariant} is the so-called  \textit{node-variant} graph filter~\cite{segarra2017optimal}, which allows a per-node weighting scheme of each shifted version of the input signal. Due to its relevance in this work, we distinguish among two flavours of a NV-GF, henceforth referred to as \textit{type-I} and \textit{type-II}, defined, for a given a graph $\bbS$ and fixed order $L-1$, respectively as:
\vspace{-.1cm}
\begin{align}
\label{eq:node-variant}
    \bbH_I(\bbP, \bbS) &= \sum_{l=0}^{L-1} \Diag(\bbp_l) \bbS^l, \\
    \label{eq:node-variant-2}
    \bbH_{II}(\bbP, \bbS) &= \sum_{l=0}^{L-1}  \bbS^l \Diag(\bbp_l),
\end{align}
where $\bbP \in \complex^{N \times L}$ collects the filter coefficients $\bbP=[\bbp_0, \ldots, \bbp_{L-1}]$ with $\bbp_l:= [p_{l1}, \ldots, p_{lN}]^\top \in \complex^N$ the $l$-th hop filter tap vector. As a short-hand notation, we will use   $\bbH_I$ and $\bbH_{II}$ to refer to the NV-GF in~\eqref{eq:node-variant} and~\eqref{eq:node-variant-2}, respectively; when convenient for clarity of exposition, we will explicitly write  $\bbH_I(\bbP,\bbS)$ or $\bbH_{II}(\bbP,\bbS)$ concordantly.  The application of  a NV-GF on a signal $\bbx$ to obtain a new signal $\bby$ will be referred to as \textit{node-variant graph convolution}. 
From a theoretical point of view, both NV-GF types have the same expressive behavior, yet the order of shifting and weighing is reversed. Specifically, in type-I, each node performs a linear combination of the (shifted) signal values of neighboring nodes, where the weights of the linear combination are neighbor-specific; in type-II, each node performs a linear combination of the (shifted) signal values of neighboring nodes, which have been already scaled by such nodes. Nonetheless, both can be implemented with the same complexity and in a distributed manner~\cite{segarra2017optimal}. 

\smallskip\noindent\textbf{Dual Graph.} Although often not explicitly stated in the academic literature, the support of the GFT signal $\hat{\bbx}$ is described by the eigenvalues $\bblambda$ of $\bbS$, which correspond to a discretization/sampling of a continuous domain, either  the real line $\reals$ or the complex plane $\complex$. This is consistent with the discrete signal processing notion of frequency domain: when $\bbS$ represents a cycle graph, possibly capturing the time domain, its eigenvector matrix $\bbV^{-1}$ coincides with the (normalized) DFT matrix, and its eigenvalues $\bblambda$ with the complex frequencies on the unit circle, i.e.,  $\bblambda= [1, e^{-j2\pi/N}, \ldots, e^{-j2\pi(N-1)/N}]$.
However, a modern line of research attempts to model the (graph) frequency domain by means of a graph~\cite{leus2021dual}. The motivation behind this line of research relies on the fact that classical signal processing tasks usually performed in the frequency domain, such as frequency-shifting, do not have their counterpart in GSP. Furthermore, given that a graph signal is inherently associated with a graph structure, it is desirable to establish a corresponding Fourier representation that is also inherently linked to a graph structure. This leads to the notion of a dual graph\footnote{Not to be confused with the dual graph notion in graph theory, as the graph which has a vertex for each face of the original graph.} $\bbS_f= \bbV_f \bbLambda_f \bbV_f^{-1}$, which represents the support for the GFT signal $\hat{\bbx}$. Because we want the GFT $\bbV_f^{-1}$ associated to the dual graph to map $\hat{\bbx}$ back to the signal $\bbx$, i.e.,  $\bbx = \bbV_f^{-1} \hat{\bbx}$, and we know that $\bbx=\bbV\hat{\bbx}$, we must have $\bbV_f=\bbV^{-1}$.

Thus, the primal graph provides \textit{spectral templates} for the graph frequency domain, i.e., the eigenvectors $\bbV_f$ for the dual graph $\bbS_f$ are known once we know those of $\bbS$. The only unknown is then the eigenvalue matrix $\bbLambda_f:= \Diag(\bblambda_f)$, which can be found, for instance, with an axiomatic or an optimization approach~\cite{leus2021dual}, or in a data-driven manner as we will show in Section~\ref{sec:learning}. Although~\cite{shi2022graph} proposes $\bblambda_f = \bblambda^\star$, we do not see this as a favorable definition, since in our view it only holds when specified to the ``temporal'' graph; in all the other cases, especially for undirected graphs, it would implies that primal and dual eigenvalues always coincide. 
Such an interpretation would be inconsistent with the desirable properties highlighted in~\cite[Axioms (1-3)]{leus2021dual}.


\section{An Encompassing Convolution Theorem}
\label{sec:theorem} 


\vspace{-.3cm}
In this section we propose a convolution theorem which encompasses the graph convolution theorem~[cf.~\eqref{eq:graph-convolution-theorem}] introduced in Section~\ref{sec:prelims} and the convolution theorem related to time-varying filters, which will be introduced later on to highlight similarities and differences. This generalization is made possible by using the node-variant graph filters~\eqref{eq:node-variant} and \eqref{eq:node-variant-2}   with an appropriate parametrization of the filter coefficients. Specifically, we show how a limited order NV-GF in the primal domain can be expressed as a limited order NV-GF in the dual domain. This is formally stated in the following theorem.

\begin{theorem}[Node-variant convolution theorem]
\label{theorem:one}
\textit{ Consider a type-I NV-GF $\bbH_I$ defined over the graph $\bbS$ with filter taps $\{\bbp_l\}_{l=0}^{L-1}$, i.e.,  $\bbH_I(\bbP,\bbS)$, and assume that a dual graph $\bbS_f$ with dual graph frequencies $\bblambda_f$ describing the dual domain is given. Assume also that each filter tap vector $\bbp_l$ can be expressed as a polynomial of order $K-1$ in $\bblambda_f$. Then, there exists a set of coefficients $\{\hat{\bbp}_k\}_{k=0}^{K-1}$ for which the type-I NV-GF $\bbH_I(\bbP,\bbS)$ in the primal domain corresponds to a type-II NV-GF $\bbH_{II}$ on the dual graph $\bbS_f$ with filter taps $\{\hat{\bbp}_k\}_{k=0}^{K-1}$, i.e., $\bbH_{II}(\hat{\bbP},\bbS_f)$.}
\end{theorem}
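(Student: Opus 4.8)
The plan is to read the claimed ``correspondence'' exactly as the convolution theorem in~\eqref{eq:graph-convolution-theorem} is read: as a frequency-domain identity. Writing $\bby = \bbH_I(\bbP,\bbS)\bbx$ and passing to the GFT signals $\hat{\bbx} = \bbV^{-1}\bbx$ and $\hat{\bby} = \bbV^{-1}\bby$, I want to establish $\hat{\bby} = \bbH_{II}(\hat{\bbP},\bbS_f)\hat{\bbx}$ for a suitable tap collection $\hat{\bbP}$. Since $\hat{\bby} = \bbV^{-1}\bbH_I(\bbP,\bbS)\bbV\hat{\bbx}$, the whole statement reduces to showing that the conjugated operator $\bbV^{-1}\bbH_I(\bbP,\bbS)\bbV$ is itself a type-II NV-GF on the dual graph $\bbS_f$.

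First I would use the polynomial hypothesis. Writing each tap as $\bbp_l = \sum_{k=0}^{K-1} c_{lk}\,\bblambda_f^{k}$ (entrywise powers of the dual-frequency vector) turns the diagonal prefactor into a matrix polynomial in $\bbLambda_f = \Diag(\bblambda_f)$, namely $\Diag(\bbp_l) = \sum_{k=0}^{K-1} c_{lk}\bbLambda_f^{k}$. Substituting into~\eqref{eq:node-variant} and exchanging the two finite sums gives $\bbH_I(\bbP,\bbS) = \sum_{k=0}^{K-1}\bbLambda_f^{k}\big(\sum_{l=0}^{L-1} c_{lk}\bbS^l\big)$, i.e.\ a sum of diagonal spectral weights $\bbLambda_f^k$ multiplying classical (node-invariant) primal filters.

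The core of the argument is then the conjugation. Using $\bbV_f = \bbV^{-1}$ (established in the dual-graph discussion) we have $\bbS_f = \bbV^{-1}\bbLambda_f\bbV$, hence $\bbS_f^{k} = \bbV^{-1}\bbLambda_f^{k}\bbV$; combined with $\bbS^l = \bbV\bbLambda^l\bbV^{-1}$ this yields the single key identity $\bbV^{-1}\bbLambda_f^{k}\bbS^{l}\bbV = \bbS_f^{k}\bbLambda^{l}$. Applying it termwise gives $\bbV^{-1}\bbH_I(\bbP,\bbS)\bbV = \sum_{k=0}^{K-1}\bbS_f^{k}\,\Diag(\hat{\bbp}_k)$, where I read off the dual taps $\hat{\bbp}_k = \sum_{l=0}^{L-1} c_{lk}\,\bblambda^{l}$ (entrywise powers of the primal-frequency vector). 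This is precisely $\bbH_{II}(\hat{\bbP},\bbS_f)$ with $\hat{\bbP} = [\hat{\bbp}_0,\ldots,\hat{\bbp}_{K-1}]$, which proves the claim and exposes the pleasant duality: the primal taps are degree-$(K-1)$ polynomials in $\bblambda_f$, while the dual taps are degree-$(L-1)$ polynomials in $\bblambda$.

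I expect the only real obstacle to be conceptual rather than computational: correctly interpreting ``corresponds to'' as conjugation by the GFT (a similarity on the frequency side) rather than plain operator equality, and noticing that it is exactly the polynomial parametrization that converts the left diagonal factors $\Diag(\bbp_l)$ into powers $\bbS_f^k$ appearing on the \emph{left} of the spectral weights---this left/right reshuffling is what flips a type-I filter in the primal domain into a type-II filter in the dual domain. The remaining manipulations (expanding the polynomial, swapping finite sums, and the one-line similarity identity) are routine once this viewpoint is fixed.
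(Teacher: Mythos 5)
Your proposal is correct and follows essentially the same route as the paper's proof: expand each tap $\bbp_l$ in powers of $\bblambda_f$ so that $\Diag(\bbp_l)=\sum_k c_{lk}\bbLambda_f^k$, swap the finite sums, and use $\bbS_f^k=\bbV^{-1}\bbLambda_f^k\bbV$ together with $\bbS^l=\bbV\bbLambda^l\bbV^{-1}$ to read off the dual taps $\hat{\bbp}_k=\sum_l c_{lk}\bblambda^l$. The only cosmetic difference is that you state the result as the operator identity $\bbV^{-1}\bbH_I(\bbP,\bbS)\bbV=\bbH_{II}(\hat{\bbP},\bbS_f)$ while the paper tracks the action on a signal $\hat{\bbx}$, which is equivalent (and is exactly what the paper's Corollary~\ref{corollary: categoric} records).
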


\begin{proof}
By multiplying both sides of~\eqref{eq:node-variant} with the GFT matrix $\bbV^{-1}$, we have:
\begin{align}
\label{eq:gft-node-variant}
    \hat{\bby}&= \bbV^{-1} \sum_{l=0}^{L-1} \Diag(\bbp_l) \bbS^l \bbx =  \bbV^{-1}  \sum_{l=0}^{L-1} \Diag(\bbp_l) \bbV \bbLambda^l  \hat{\bbx}.
\end{align}
Next, we use a basis expansion model (BEM)~\cite{giannakis1998basis} to express the NV filter coefficients $\{\bbp_l\}_{l=0}^{L-1}$ as a linear combination of a dual graph dependent basis. Specifically, we express each $\bbp_l$ through powers of the dual eigenvalues $\bblambda_f$, representing our basis expansion; that is:
\begin{align}
\label{eq:primal-coefficients}
    \bbp_l= \sum_{k=0}^{K-1} c_{lk}\bblambda_f^k= \bbPsi_f \bbc_l
\end{align}
with  $\bbPsi_f:=[\boldsymbol{1}, \; \bblambda_f, \ldots,  \bblambda_f^{K-1}]$ the Vandermonde matrix of dual eigenvalues and $\bbc_l:= [c_{l0}, \ldots, c_{l(K-1)}]^\top$ the expansion coefficients for the $l$-th primal filter tap vector $\bbp_l$. With this choice, substituting~\eqref{eq:primal-coefficients} in~\eqref{eq:gft-node-variant}, we have:%
\begin{align}
\label{eq:dual-node-variant}
    \hat{\bby}&= \nonumber \bbV^{-1}  \sum_{l=0}^{L-1} \Diag(\sum_{k=0}^{K-1} c_{lk}\bblambda_f^k) \bbV \bbLambda^l  \hat{\bbx} \\ \nonumber
    &=  \sum_{k=0}^{K-1} \bbV^{-1}    \Diag(\bblambda_f^k) \bbV \Diag(\sum_{l=0}^{L-1} c_{lk} \bblambda^l)  \hat{\bbx}\\ 
    &= \sum_{k=0}^{K-1} \bbS_f^k \Diag(\hat{\bbp}_k)\hat{\bbx}
\end{align}
where $\hat{\bbp}_k:=\sum_{l=0}^{L-1} c_{lk} \bblambda^l= \bbPsi \hat{\bbc}_k$ is the $k$-th hop filter tap vector on the \textit{dual} graph, with $\bbPsi:= [\boldsymbol{1}, \bblambda, \ldots, \bblambda^{L-1}]$ the Vandermonde matrix of primal eigenvalues, and $ \hat{\bbc}_k:=[c_{0k}, \ldots, c_{(L-1)k}]^\top$ the expansion coefficients for the $k$-th dual filter tap vector $\hat{\bbp}_k$. So in the frequency domain, we also obtain a NV-GF denoted as $\bbH_{II}= \sum_{k=0}^{K-1} \bbS_f^k \diag(\hat{\bbp}_k)$ . Whenever the dependency on the dual filter coefficients and shift operator is necessary, we use $\bbH_{II}(\hat{\bbP},\bbS_f)$, where  $\hat{\bbP}$ is the $N \times K$ matrix of coefficients $\hat{\bbP}=[\hat{\bbp}_0, \ldots, \hat{\bbp}_{K-1}]$. 
\end{proof}

\noindent This theorem allows us to delineate a general convolution theorem encompassing~\eqref{eq:graph-convolution-theorem} as a special case, which relies on node-variant graph filtering and pictorially described in Fig.~\ref{fig:duality}, as follows:
\begin{tcolorbox}
\vspace{-.2cm}
  \begin{align}
\label{eq:duality-node-varying}
   \hspace{-.2cm}\bby &\!=\! \sum_{l=0}^{L-1} \Diag(\bbp_l)\bbS^l \bbx  \quad  \quad  \hat{\bby}\!=\!\sum_{k=0}^{K-1} \bbS_f^k \Diag( \hat{\bbp}_k) \hat{\bbx} \\
   \label{eq:duality-coefficients}
    &\bbp_l=  \bbPsi_f \bbc_l \quad   \quad  \quad    \quad \quad    \quad \hat{\bbp}_k=  \bbPsi  \hat{\bbc}_k
\end{align}
\end{tcolorbox}
The connection between the primal and the dual node-variant graph filters defined in~\eqref{eq:duality-node-varying} is given by the $K \times L$  expansion coefficients conveniently stored in the matrix $\bbC=[\bbc_0, \ldots,\bbc_{L-1}]= [ \hat{\bbc}_0, \ldots,  \hat{\bbc}_{K-1}]^\top$. This enables also to concisely express the node-variant coefficients in the primal and dual domain as $\bbP= \bbPsi_f \bbC$ and $\hat{\bbP}= \bbPsi\bbC^\top$, respectively.

 \begin{figure}[t!]
     \centering
  \includegraphics[width=.95\columnwidth, trim= 5cm 4cm 5cm 3cm, clip=True]{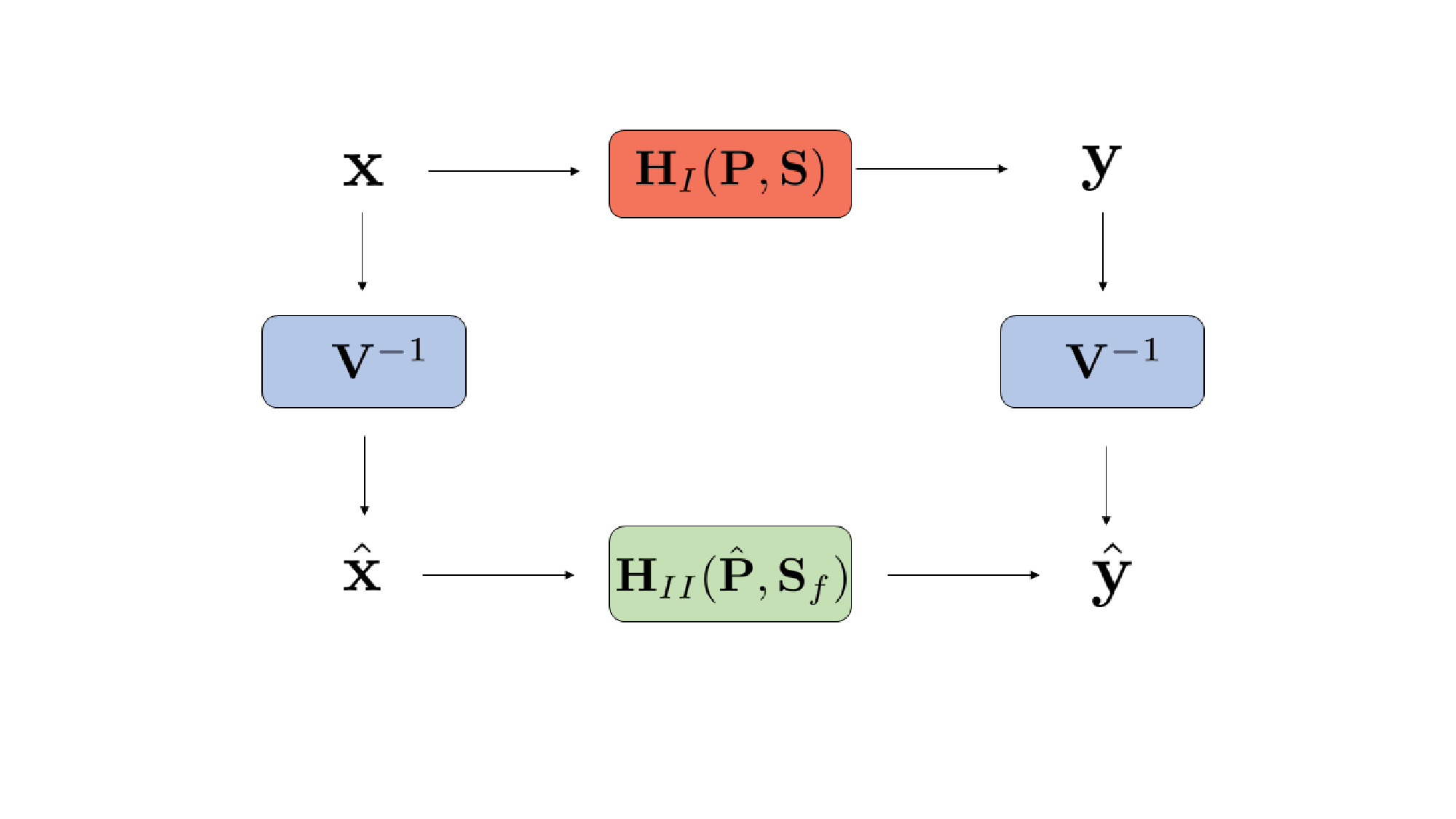}
  \vspace{-.25cm}
     \caption{General convolution theorem. A node-variant graph convolution in the primal domain is equivalent to a node-variant graph convolution in the dual domain.}
     \label{fig:duality}
 \end{figure}

\begin{remark}
    The same type of theorem construction can be obtained by reversing the types of filters adopted in the primal and dual domain; that is, applying a type-II NV-GF in the primal domain is equivalent to applying a type-I NV-GF in the dual domain, with the graph filter coefficients following the parametrization in~\eqref{eq:duality-coefficients}.                                                               
\end{remark}

 %
 \begin{corollary}
 \label{corollary: categoric}
 Given a graph signal $\bbx$, the application of a  node-variant graph filter $\bbH_I(\bbP,\bbS)$ in the primal domain followed by the GFT $\bbV^{-1}$ is equivalent to the application of the GFT followed by a node-variant graph filter $\bbH_{II}(\hat{\bbP},\bbS_f)$ in the dual domain. In other words, it holds (see also Fig.~\ref{fig:duality}):
\begin{align}
\label{eq: categoric}
   \bbV^{-1} \bbH_I(\bbP,\bbS) =  \bbH_{II}(\hat{\bbP},\bbS_f)\bbV^{-1} .
\end{align}
 %

 \end{corollary}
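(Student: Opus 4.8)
The plan is to observe that Corollary~\ref{corollary: categoric} is nothing more than the operator-level (signal-free) restatement of Theorem~\ref{theorem:one}. The proof of the theorem establishes, for the signal $\bbx$ appearing there, the vector identity $\bbV^{-1}\bbH_I(\bbP,\bbS)\bbx = \bbH_{II}(\hat{\bbP},\bbS_f)\bbV^{-1}\bbx$; the task is to lift this from vectors to matrices by stripping off the trailing signal.

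The cleanest argument is the following. The chain of equalities~\eqref{eq:dual-node-variant} used in proving Theorem~\ref{theorem:one} never invokes any special property of the input: it relies only on $\bbS^l = \bbV\bbLambda^l\bbV^{-1}$ and the basis expansion $\bbp_l = \sum_{k=0}^{K-1} c_{lk}\bblambda_f^k$. Hence, for every $\bbx \in \complex^N$ we have $\bbV^{-1}\bbH_I(\bbP,\bbS)\bbx = \bbH_{II}(\hat{\bbP},\bbS_f)\bbV^{-1}\bbx$. Since $\bbV^{-1}$ is invertible, $\hat{\bbx} = \bbV^{-1}\bbx$ ranges over all of $\complex^N$ as $\bbx$ does; two matrices that agree on every vector are equal, so $\bbV^{-1}\bbH_I(\bbP,\bbS) = \bbH_{II}(\hat{\bbP},\bbS_f)\bbV^{-1}$, which is exactly~\eqref{eq: categoric}.

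Alternatively, and equivalently, I would reproduce the manipulations of~\eqref{eq:dual-node-variant} purely at the matrix level: starting from $\bbV^{-1}\sum_{l=0}^{L-1}\Diag(\bbp_l)\bbS^l$, substitute the eigendecomposition and the BEM, interchange the finite sums over $l$ and $k$, and invoke the key conjugation identity $\bbS_f^k = \bbV^{-1}\Diag(\bblambda_f^k)\bbV$ (which holds because $\bbV_f = \bbV^{-1}$ gives $\bbS_f = \bbV^{-1}\bbLambda_f\bbV$). Each summand then collapses to $\bbS_f^k\Diag(\hat{\bbp}_k)\bbV^{-1}$ with $\hat{\bbp}_k = \sum_{l=0}^{L-1} c_{lk}\bblambda^l$, and summing over $k$ yields $\bbH_{II}(\hat{\bbP},\bbS_f)\bbV^{-1}$. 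There is essentially no obstacle in this corollary; the only point requiring care is noting that the derivation is input-independent, so that the equality genuinely lifts from signals to operators.
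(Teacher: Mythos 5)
Your proposal is correct and matches the paper's intent exactly: the paper gives no separate proof of Corollary~\ref{corollary: categoric}, treating it as an immediate operator-level restatement of the derivation in~\eqref{eq:dual-node-variant}, which (as you note) uses only $\bbS^l = \bbV\bbLambda^l\bbV^{-1}$ and the BEM parametrization and is therefore independent of the input signal. Your explicit remark that agreement on all vectors $\hat{\bbx} = \bbV^{-1}\bbx$ (which range over all of $\complex^N$ by invertibility of $\bbV$) lifts the identity to the matrix equality~\eqref{eq: categoric} is precisely the step the paper leaves implicit.
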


 In temporal signal processing, the frequency representation of windowing in the time domain is the convolution between the spectra of the signal and the window.  Because a node-variant convolution of order $L-1$ is nothing else than the application of $L$ windows on shifted versions of the input graph signal $\bbx$, a similar result can be derived in the graph setting; the following corollary expresses this.
 \begin{corollary}
\label{corollary:duality}
 Given an input graph signal $\bbx$ and a type-I NV-GF $\bbH_I(\bbP,\bbS)$, with each $\{\bbp_l\}_{l=0}^{L-1}$ parametrized as in~\eqref{eq:duality-coefficients}, a node-variant graph convolution of order $L-1$ in the primal domain is equivalent to the sum of $L$ classical graph convolutions in the dual domain, each one with as input a (modulated) version of $\hat{\bbx}$; that is:
  \begin{align}
     \hat{\bby}&= \sum_{l=0}^{L-1} \bbH(\bbc_l, \bbS_f) (\bblambda^l \odot \hat{\bbx})\\
     &= \bbH(\bbc_0,\bbS_f)\hat{\bbx} + \ldots + \bbH(\bbc_{L-1}, \bbS_f) (\bblambda^{L-1} \odot \hat{\bbx})
 \end{align}
 \end{corollary}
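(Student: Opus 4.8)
The plan is to reuse the algebra already developed in the proof of Theorem~\ref{theorem:one}, but to stop regrouping the double sum one step earlier, so that the hop index $l$ (rather than $k$) remains the outer summation variable. First I would start from the GFT of the output of a type-I NV-GF, i.e.,~\eqref{eq:gft-node-variant},
\[
\hat{\bby} = \bbV^{-1}\sum_{l=0}^{L-1}\Diag(\bbp_l)\,\bbV\,\bbLambda^l\,\hat{\bbx},
\]
and then substitute the basis expansion~\eqref{eq:primal-coefficients}, $\bbp_l = \sum_{k=0}^{K-1} c_{lk}\bblambda_f^k = \bbPsi_f\bbc_l$, pushing the GFT matrix $\bbV^{-1}$ through each summand while keeping $l$ outermost:
\[
\hat{\bby} = \sum_{l=0}^{L-1}\left[\bbV^{-1}\Diag\!\left(\sum_{k=0}^{K-1} c_{lk}\bblambda_f^k\right)\bbV\right]\bbLambda^l\,\hat{\bbx}.
\]

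The central step is to recognize the bracketed factor as a classical graph filter on the dual graph. Using linearity of $\Diag(\cdot)$, the identity $\Diag(\bblambda_f^k)=\bbLambda_f^k$, and the relation $\bbS_f = \bbV^{-1}\bbLambda_f\bbV$ (which follows from $\bbV_f=\bbV^{-1}$, as established in the Dual Graph paragraph), the bracket collapses to $\sum_{k=0}^{K-1} c_{lk}\bbS_f^k = \bbH(\bbc_l,\bbS_f)$, i.e., exactly the C-GF of~\eqref{eq:node-invariant} with taps $\bbc_l$ evaluated on $\bbS_f$. Finally I would interpret the remaining action of $\bbLambda^l$: since $\bbLambda=\Diag(\bblambda)$, we have $\bbLambda^l\hat{\bbx}=\Diag(\bblambda^l)\hat{\bbx}=\bblambda^l\odot\hat{\bbx}$, the Hadamard modulation of the GFT signal by the $l$-th elementwise power of the primal eigenvalues. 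Assembling these identities gives $\hat{\bby} = \sum_{l=0}^{L-1}\bbH(\bbc_l,\bbS_f)(\bblambda^l\odot\hat{\bbx})$, and writing out the sum yields the stated expansion.

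There is no genuine obstacle here: the result is a re-bracketing of the very computation that proves Theorem~\ref{theorem:one}. The only point requiring care is to resist collapsing the double sum over $l$ and $k$ (as done en route to~\eqref{eq:dual-node-variant}), instead halting at the per-$l$ grouping so that each term is a single C-GF acting on a modulated input. The sole conceptual identification — reading $\bbLambda^l\hat{\bbx}$ as a modulation $\bblambda^l\odot\hat{\bbx}$ — is what produces the windowing/modulation interpretation advertised in the statement, mirroring the time-domain fact that windowing corresponds to spectral convolution.
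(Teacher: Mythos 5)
Your proposal is correct and follows essentially the same route as the paper's proof: both substitute the basis expansion into the GFT-domain expression, use $\bbV^{-1}\Diag(\bblambda_f^k)\bbV=\bbS_f^k$ to collapse the inner $k$-sum into the classical graph filter $\bbH(\bbc_l,\bbS_f)$ while keeping $l$ as the outer index, and read $\bbLambda^l\hat{\bbx}$ as the modulation $\bblambda^l\odot\hat{\bbx}$. The only cosmetic difference is that the paper first expands the double sum fully before regrouping, whereas you keep the per-$l$ bracketing throughout; the algebraic content is identical.
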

 
 \begin{proof}
 From the first equality of~\eqref{eq:dual-node-variant}, we have:
 \begin{align}
     \hat{\bby}&=\sum_{l=0}^{L-1} \sum_{k=0}^{K-1} c_{lk} \bbV^{-1} \Diag(\bblambda_f^k) \bbV \bbLambda^l \hat{\bbx} \nonumber\\ \nonumber
      &= \sum_{l=0}^{L-1} \left(\sum_{k=0}^{K-1} c_{lk} \bbS_f^k\right) \bbLambda^l \hat{{\bbx}}\\
      \label{eq:modulation}
      &=\sum_{l=0}^{L-1} \bbH(\bbc_l, \bbS_f) (\bblambda^l \odot \hat{\bbx}) 
 \end{align}
 Notice how the filter coefficients in~\eqref{eq:modulation} are the expansion coefficients $\bbc_l$ associated to the primal filter coefficients $\bbp_l$.
 \end{proof}
\noindent
An important consequence of Corollary~\ref{corollary:duality} is that windowing in the primal domain is equivalent to a C-GF in the dual domain; we will rely upon this when introducing non-stationary signal models in Section~\ref{sec:non-stationary}.
 
 \subsection{Consistency with the graph convolution theorem}
\label{remark:constant}
Because a C-GF is a NV-GF with constant filter taps, we expect that our introduced theory  encompasses the existing one. Indeed, we  can formally show that the graph convolution theorem~\eqref{eq:graph-convolution-theorem}  falls within the introduced theory. To see this, consider $\bbp_l= p_l\boldsymbol{1}, \; \forall \; l$, i.e., the case in which each vector of filter taps $\bbp_l$ is constant over the nodes, thus corresponding to the C-GF as in~\eqref{eq:node-invariant}. The column space of $\bbP$ is then one-dimensional, specifically spanned by the all-one vector. By construction, the first column of the Vandermonde matrix is the all-one vector. Thus any $\bblambda_f$ can be used as it will be zeroed-out by the matrix $\bbC$. As we will see later on, this also means that we cannot learn any dual graph from stationary graph signals, since any $\bblambda_f$ would suffice.  Specifically,  from~\eqref{eq:duality-coefficients}, we have that $\bbc_l$ necessarily needs to be $\bbc_l= [p_l, \bb0^\top]^\top$, and overall:
\begin{align}
    [p_1 \boldsymbol{1}, \ldots, p_{L-1} \boldsymbol{1}]\!=\!
    \begin{bmatrix}
    1 & \cdots & \lambda_{0,f}^{K-1}\\ 
    \vdots & \ddots & \vdots \\
    1 & \cdots & \lambda_{f,N-1}^{(K-1)}
    \end{bmatrix}
    \begin{bmatrix}
    p_1 & \cdots & p_{L-1}\\ 
    0 & \cdots & 0\\
    \vdots & \ddots & \vdots \\
    0 & \cdots & 0
    \end{bmatrix}
\end{align}
meaning that only the first row $\hat{\bbc}_0$ of $\bbC$ is different from zero. In particular, from the right equation in~\eqref{eq:duality-coefficients} this implies that:
\begin{align}
    \hat{\bby}=  \bbS_f^0 \Diag( \hat{\bbp}_0) \hat{\bbx} = \Diag(\bbPsi \bbp) \hat{\bbx}.
\end{align}
 This shows how the proposed theory fits within the  principle that a classical graph convolution (node-invariant GF) is a pointwise multiplication in the frequency domain.
 
 \sn
 Likewise, one can similarly show that if $\hat{\bbp}_k= \hat{p}_k \boldsymbol{1}, \; \forall \; k $, i.e., the case in which each vector of filter taps $\hat{\bbp}_k$ in the frequency domain is constant over the nodes (frequencies), then only $\bbc_0=\hat{\bbp}:=[\hat{p}_0, \ldots, \hat{p}_{K-1}]^\top$ is different from zero. This leads to a pointwise multiplication (windowing) [cf.~\eqref{eq:duality-coefficients} left] in the primal domain:
 \begin{align}
     \bby= \Diag(\bbp_0)\bbS^0\bbx= \Diag(\bbPsi_f\hat{\bbp})\bbx,
 \end{align}
which also complies with Corollary~\ref{corollary:duality}.

 \sn
 Another interesting relation arises when considering a NV-GF with $p_{li} = p_i$ for all $l$; i.e.,  the case in which each node $i$ uses the same weight $p_i$ (possibly different from $p_j$ of node $j$) for the diffused sequence $\{\bbx, \bbS \bbx, \ldots, \bbS^{L-1}\bbx\}$. In this case, multiple $\bblambda_f$ and $\bbC$ satisfy the theorem; for instance we can choose $\bblambda_f = [p_1, \ldots, p_N]^\top$ and  $\bbc_l= [0, 1, \bb0^\top]^\top$, i.e.,:
\begin{align}
\begin{bmatrix}
    p_1 \boldsymbol{1}^\top \\
       \vdots \\
    p_N \boldsymbol{1}^\top
\end{bmatrix} = 
    \begin{bmatrix}
    1 & p_1 & \cdots & p_1^{K-1}\\ 
    \vdots & \vdots & \vdots \\
    1 & p_N & \cdots & p_N^{(K-1)}
    \end{bmatrix}
     \begin{bmatrix}
    0 & \cdots & 0\\ 
    1 & \cdots & 1\\
    \vdots & \cdots & \vdots \\
    0 & \cdots & 0
    \end{bmatrix}
\end{align}

 This implies that:
 \begin{align}
    \hat{\bby}=  \bbS_f \Diag( \hat{\bbp}_1) \hat{\bbx} = \bbS_f \Diag(\bbPsi \boldsymbol{1}) \hat{\bbx},
\end{align}
so that each $\hat{x}_i$ is multiplied with $\hat{p}_{1i}= [1, \lambda_i, \ldots, \lambda_i^{L-1}] \boldsymbol{1}$.

 
 \subsection{Relationship with time-varying channel propagation}
 The proposed theory also generalizes, to the graph setting, concepts which are familiar in the context of time-varying channel propagation~\cite{tang2007pilot}, arising for instance in mobile communication scenarios. In that case, the received signal $y$ at time $n$, i.e., $y[n]$\footnote{We use square brackets to indicate that the argument is a time index and not a graph node.}, is modeled as:
 \begin{align}
 \label{eq:time-varying-channel}
     y[n]= \sum_{l=0}^{L-1} p[n,l] x[n-l],
 \end{align}
 where $p[n,l]$ denotes the channel impulse response of the $l$-th path at the $n$-th time instant, and $x[n-l]$ is the transmitted signal at the $(n-l)$-th time instant. The gains  associated to the different paths are assumed to be time-varying and approximated by a basis expansion model~\cite{giannakis1998basis}; specifically:
 \begin{align}
 \label{eq:bem}
     \bbp_l= \sum_{k=0}^{K-1} c_{lk} \bbb_k,
 \end{align}
 where $ \bbp_l=[p[0,l], \ldots, p[N-1,l]]^\top$ stores the evolution of the filter impulse response over the $N$ time instants, $\bbb_k \in \reals^N$ is the $k$-th basis function, and $c_{lk}$ is the coefficient associated to the $l$-th path and the $k$-th basis function. This alleviates the effort of having to deal with  $NL$ channel coefficients (usually a very high number), and converts the  model into a simpler one with only $LK$ BEM coefficients.
 
 It is easy to show that we can write~\eqref{eq:time-varying-channel} in matrix-vector form, by taking into account~\eqref{eq:bem}, as:
 \begin{align}
 \label{eq: time-varying-vector}
    \bby=  \sum_{k=0}^{K-1} \Diag(\bbb_k) \left(\sum_{l=0}^{L-1}c_{lk}\bbD^l\right) \bbx
 \end{align}
 where $\bbx= [x[0], \ldots, x[N-1]]^\top$ and $\bbD$ is the $N \times N$ lower delay matrix; notice how the matrix $\sum_{l=0}^{L-1}c_{lk}\bbD^l$ implements a standard convolutional filter in time and observe its similarities with the left equation in~\eqref{eq:duality-node-varying}. Next, denote with $\bbF \in \mathbb{C}^{N \times N}$ the normalized DFT matrix, and with $\bbf_k$ its $k$th column; the classical complex exponential BEM uses the Fourier basis as basis functions in~\eqref{eq:bem}, i.e., $\bbb_k=\sqrt{N} \bbf_k$. As such the gains $\bbp_l$ associated to the $l$-th path [cf.~\eqref{eq:bem}] are modelled as smoothly-varying over time and hence expressed with a small number $K$ of Fourier basis. Increasing $K$ accommodates for faster changes.

\sn
With this choice, \eqref{eq: time-varying-vector} can be expressed in the frequency domain as:
 \begin{align}
 \label{eq:time-varying-convolution-theorem}
   \nonumber  \hat{\bby}&= \bbF\bby=  \sum_{k=0}^{K-1} \bbF\Diag(\sqrt{N}\bbf_k)\sum_{l=0}^{L-1} c_{lk} {\bf F}^H {\rm Diag}(\sqrt{N}{\bf f}_l) {\bf F}\bbx \\
     &= \sum_{l=0}^{L-1} \left(  \sum_{k=0}^{K-1} c_{lk} {\bf D}^k \right) {\rm Diag}(\sqrt{N}{\bf f}_l)\hat{\bbx}.
 \end{align}
 While in~\eqref{eq: time-varying-vector} the matrix $\bbD$ shifts in the time domain, in~\eqref{eq:time-varying-convolution-theorem} it shifts in the frequency domain; however, such shift matrix is the same in both domains. This is different when looking at the graph counterpart in~\eqref{eq:dual-node-variant}, where the two shift matrices might be different. 
 
 \begin{remark}
      It is worthwhile to point out that the notion of smoothness for signals and filter coefficients is different in temporal and graph signal processing. In the time domain the basis to express smoothness for signals and filter coefficients is the same, both coinciding with the normalized DFT matrix. In GSP, the basis to express smoothness of graph signals and graph filter coefficients is different. Our theory shows that smoothness of graph signals is determined by the eigenvectors of the primal graph, while smoothness of the filter coefficients is determined by the Vandermonde matrix containing the dual graph frequencies.
 \end{remark}

 All in all~\eqref{eq:time-varying-convolution-theorem} is the time domain counterpart of~\eqref{eq:dual-node-variant}, by choosing the primal eigenvector matrix $\bbV$ to be $\bbV=\bbF^H$ and the basis functions $\bblambda_f^k$ to be $\bblambda_f^k= \sqrt{N}\bbf_k$.
 \sn

 %
 %


 \subsection{Non-stationarity}
 \label{sec:non-stationary}
In this section we study how and where  
(non-)stationarity of graph signals stands within the introduced theory. From~\cite{marques2017stationary}, a process $\bby$ is said to be weakly stationary on a GSO $\bbS$ if the covariance matrix $\bbC_y:= \mathbb{E}[\bby\bby^H]$ commutes with $\bbS$ or, equivalently, if $\bby$ can be written as the output of a C-GF $\bbH$ [cf.~\eqref{eq:node-invariant}] when excited with a white input $\bbx$, i.e., $\bby=\bbH\bbx$. As a consequence, the covariance matrix $\bbC_{\hat{y}}$ of the GFT process $\hat{\bby}$ is diagonal, revealing the power spectral density of the process $\bby$ on its diagonal.

\sn
While~\cite{marques2017stationary} offers conditions to identify and model stationary graph signals, it does not explore non-stationary signal models; this is our first attempt in that direction. Worth to mention is the work of~\cite{shafipour2021identifying}, where a network topology identification approach is put forth to learn the GSO $\bbS$ given a set of realizations of a non-stationary graph signal $\bby$ modelled with a classical graph convolution~\eqref{eq:graph-convolution-theorem}. There, however, non-stationarity is only considered with respect to node-invariant GFs, thus restricting the non-stationarity model taxonomy. Indeed, the (non-)stationarity of a random graph signal $\bby$ obtained as $\bby=\bbH\bbx$ for a generic graph filter $\bbH$ and an excitation input $\bbx$, depends either on the type of graph filter or the properties of the input $\bbx$. Precisely a non-stationary graph signal $\bby$ can be either modelled as the output of a shift-invariant graph filter with a non-stationary input or as the output of a node-variant graph filter when excited with a white input. The following property and propositions formally describe these claims.

\sn
\textbf{Property 1.}
Given input $\bbx$ and output $\bby=\bbH_{I}(\bbP,\bbS)\bbx$, it holds:
\begin{align}
    \bbC_y= \bbH_{I}(\bbP,\bbS)\bbC_x\bbH_{I}(\bbP,\bbS)^H
\end{align}
with $\bbC_x:= \e[\bbx\bbx^H]$. Moreover:
\begin{align}
\label{eq:covariance-gft}
    \bbC_{\hat{y}}= \bbV^{-1}\bbC_y \bbV.
\end{align}

\noindent Depending on the structure of the graph filter $\bbH_I(\bbP, \bbS)$ and the input signal $\bbx$, the ensuing propositions can be derived. Unless explicitly stated differently, we assume that the graph filter coefficient matrix $\bbP$ respects the parametrization in~\eqref{eq:duality-coefficients}, i.e., $\bbP= \bbPsi_f \bbC$, for some order $L, K$.

\begin{proposition}
    If $\bbx$ is a white graph signal, then [cf. Corollary~\ref{corollary: categoric}]:
    \begin{align}
    \label{eq:Cy}
        \bbC_y &= \bbH_{I}(\bbP,\bbS)\bbH_{I}(\bbP,\bbS)^H \\
        \label{eq:C_yhat}
        \bbC_{\hat{y}} &= \bbH_{II}(\hat{\bbP},\bbS_f) \bbH_{II}(\hat{\bbP},\bbS_f)^H.
    \end{align}
     In general, this means that $\bby$ is non-stationary on $\bbS$ and $\hat{\bby}$ is non-stationary on $\bbS_f$.
\end{proposition}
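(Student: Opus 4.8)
The plan is to read off the first covariance identity directly from Property~1 and to obtain the second by transporting everything to the dual domain through Corollary~\ref{corollary: categoric}. For a white input I take $\bbC_x = \bbI$, so substituting $\bbC_x=\bbI$ into the first equation of Property~1 immediately yields $\bbC_y = \bbH_I(\bbP,\bbS)\bbC_x\bbH_I(\bbP,\bbS)^H = \bbH_I(\bbP,\bbS)\bbH_I(\bbP,\bbS)^H$, which is~\eqref{eq:Cy}. No further work is needed for the first claim.

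For~\eqref{eq:C_yhat} I would push the filtering relation through the GFT. Writing $\hat{\bby} = \bbV^{-1}\bby = \bbV^{-1}\bbH_I(\bbP,\bbS)\bbx$ and applying the intertwining identity of Corollary~\ref{corollary: categoric}, namely $\bbV^{-1}\bbH_I(\bbP,\bbS) = \bbH_{II}(\hat{\bbP},\bbS_f)\bbV^{-1}$, gives $\hat{\bby} = \bbH_{II}(\hat{\bbP},\bbS_f)\bbV^{-1}\bbx = \bbH_{II}(\hat{\bbP},\bbS_f)\hat{\bbx}$. Hence $\bbC_{\hat{y}} = \e[\hat{\bby}\hat{\bby}^H] = \bbH_{II}(\hat{\bbP},\bbS_f)\bbC_{\hat{x}}\bbH_{II}(\hat{\bbP},\bbS_f)^H$, and it remains to argue that the GFT of a white signal is again white, i.e. $\bbC_{\hat{x}} = \bbI$; under the standing DFT-like assumption the GFT matrix $\bbV^{-1}$ is unitary, so $\bbC_{\hat{x}} = \bbV^{-1}\bbC_x\bbV^{-H} = \bbI$, giving~\eqref{eq:C_yhat}. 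Equivalently, I could substitute~\eqref{eq:Cy} into the GFT covariance relation~\eqref{eq:covariance-gft} of Property~1 and use the rearranged Corollary $\bbH_{II}(\hat{\bbP},\bbS_f) = \bbV^{-1}\bbH_I(\bbP,\bbS)\bbV$ to collapse $\bbV^{-1}\bbH_I\bbH_I^H\bbV$ into $\bbH_{II}\bbH_{II}^H$.

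Finally I would establish non-stationarity by invoking the stationarity criterion of~\cite{marques2017stationary} recalled in Section~\ref{sec:non-stationary}: a process is weakly stationary on a GSO exactly when its covariance in the associated Fourier basis is diagonal (equivalently, when it is the output of a C-GF fed with white noise). By~\eqref{eq:C_yhat}, the covariance of $\hat{\bby}$ in the $\bbV$-basis is $\bbH_{II}(\hat{\bbP},\bbS_f)\bbH_{II}(\hat{\bbP},\bbS_f)^H$, which is non-diagonal whenever $\bbH_{II}$ is a genuine node-variant filter rather than a C-GF on $\bbS_f$; hence $\bby$ is non-stationary on $\bbS$. Symmetrically, the Fourier basis of the dual graph is $\bbV_f^{-1} = \bbV$, so the dual spectrum of $\hat{\bby}$ is $\bbV\hat{\bby} = \bby$ with covariance $\bbC_y = \bbH_I(\bbP,\bbS)\bbH_I(\bbP,\bbS)^H$; this is non-diagonal whenever $\bbH_I$ is genuinely node-variant, so $\hat{\bby}$ is non-stationary on $\bbS_f$.

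The main obstacle I anticipate is the whiteness-preservation step $\bbC_{\hat{x}} = \bbI$, which quietly relies on $\bbV$ being unitary; for a general invertible $\bbV$ the GFT of white noise is white only up to the Gram factor $\bbV^{-1}\bbV^{-H}$, and the clean form~\eqref{eq:C_yhat} would otherwise pick up that factor. I would therefore make the unitarity assumption explicit, noting it is the relevant DFT-like case. A secondary point requiring care is the qualifier ``in general'': the non-stationarity conclusions hold generically but fail in the degenerate constant-tap configurations identified in Section~\ref{remark:constant}, where $\bbH_I$ (resp. $\bbH_{II}$) collapses to a C-GF and stationarity is recovered, so I would phrase the conclusion to exclude those cases.
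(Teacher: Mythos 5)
Your proof is correct and takes essentially the same route as the paper, which obtains this proposition by substituting $\bbC_x=\bbI$ into Property~1 and invoking the intertwining identity of Corollary~\ref{corollary: categoric}; your derivation of \eqref{eq:C_yhat} and the generic (``in general'') non-stationarity argument match that reasoning. Your remark that the clean form of \eqref{eq:C_yhat} requires $\bbV$ to be unitary is also consistent with the paper rather than a divergence, since the paper implicitly assumes this in \eqref{eq:covariance-gft}, where $\bbC_{\hat{y}}=\bbV^{-1}\bbC_y\bbV$ equals the true covariance of $\hat{\bby}$ only when $\bbV^{-H}=\bbV$.
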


\begin{proposition}
\label{prop:yhat-stationary-on-dual}
    If $L=1$ and $\bbx$ is a white signal (stationary by definition), then $\hat{\bby}$ is stationary on $\bbS_f$.
\end{proposition}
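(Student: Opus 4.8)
The plan is to show that the hypothesis $L=1$ collapses the type-II dual filter $\bbH_{II}(\hat{\bbP},\bbS_f)$ into a \emph{classical} (node-invariant) graph filter on $\bbS_f$, after which stationarity follows immediately from the generative characterization recalled at the start of Section~\ref{sec:non-stationary} (output of a C-GF driven by white input).

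First I would inspect the dual taps. By the node-variant convolution theorem [cf.~\eqref{eq:duality-coefficients}], the $k$-th dual tap vector is $\hat{\bbp}_k=\bbPsi\hat{\bbc}_k$, where $\bbPsi=[\boldsymbol{1},\bblambda,\ldots,\bblambda^{L-1}]$ is the primal Vandermonde matrix. When $L=1$ this matrix reduces to its single first column $\boldsymbol{1}$, so that $\hat{\bbp}_k=c_{0k}\boldsymbol{1}$ for every $k$; i.e.\ each dual tap is constant across the nodes (frequencies). Hence $\Diag(\hat{\bbp}_k)=c_{0k}\bbI$ and, from~\eqref{eq:node-variant-2},
\begin{equation*}
\bbH_{II}(\hat{\bbP},\bbS_f)=\sum_{k=0}^{K-1}\bbS_f^k\,\Diag(\hat{\bbp}_k)=\sum_{k=0}^{K-1}c_{0k}\,\bbS_f^k ,
\end{equation*}
which is exactly a C-GF of the form~\eqref{eq:node-invariant} on the dual graph $\bbS_f$ — the very degeneracy already flagged for the case $\hat{\bbp}_k=\hat{p}_k\boldsymbol{1}$ in Section~\ref{remark:constant}.

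With this identification the claim is immediate. By Corollary~\ref{corollary: categoric} we have $\hat{\bby}=\bbH_{II}(\hat{\bbP},\bbS_f)\hat{\bbx}$; since $\bbx$ is white so is $\hat{\bbx}$, and $\hat{\bby}$ is therefore the output of a classical graph filter on $\bbS_f$ driven by a white input, which is precisely the definition of a weakly stationary process on $\bbS_f$. As a cross-check via the covariance characterization, substituting the degenerate $\bbH_{II}$ into~\eqref{eq:C_yhat} gives $\bbC_{\hat{y}}=\bbH_{II}\bbH_{II}^H$ with $\bbH_{II}=\bbV^{-1}\Diag(\bbg)\bbV$ and $\bbg:=\sum_{k}c_{0k}\bblambda_f^k$; under the same GFT conventions that underlie~\eqref{eq:covariance-gft} (orthonormal $\bbV$, so that a white $\bbx$ maps to a white $\hat{\bbx}$), this yields $\bbC_{\hat{y}}=\bbV^{-1}\Diag(|\bbg|^2)\bbV$, which is diagonal in the dual eigenbasis $\bbV_f=\bbV^{-1}$ and hence commutes with $\bbS_f$.

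The computation is short, so there is little genuine obstacle: the entire content of the statement is the structural observation that $L=1$ annihilates the node-variability of the dual filter by shrinking $\bbPsi$ to the all-ones vector. The only point worth stating explicitly is that the asserted stationarity is with respect to the \emph{dual} shift $\bbS_f$ rather than the primal $\bbS$, and that the implication ``$\bbx$ white $\Rightarrow\hat{\bbx}$ white'' (needed for both the generative argument and the diagonal-covariance cross-check) inherits the same GFT conventions already invoked to establish~\eqref{eq:covariance-gft} and~\eqref{eq:C_yhat}.
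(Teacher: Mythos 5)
Your proof is correct, but it takes the reverse route through the stationarity equivalence compared with the paper. The paper argues covariance-first in the primal domain: with $L=1$ the filter is the windowing $\bby=\Diag(\bbp_0)\bbx$, whiteness of $\bbx$ gives the diagonal covariance $\bbC_y=\Diag(\bbp_0)^2$, and \eqref{eq:covariance-gft} then yields $\bbC_{\hat{y}}=\bbV^{-1}\Diag(\bbp_0)^2\bbV$, which is simultaneously diagonalized with $\bbS_f=\bbV^{-1}\bbLambda_f\bbV$ and hence commutes with it; stationarity follows from the commutation characterization, and the representation of $\hat{\bby}$ as the output of a node-invariant dual filter driven by white input is stated only afterwards, as a consequence. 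You argue filter-first in the dual domain: $L=1$ collapses $\bbPsi$ to $\boldsymbol{1}$, so each dual tap $\hat{\bbp}_k=c_{0k}\boldsymbol{1}$ is constant, $\bbH_{II}(\hat{\bbP},\bbS_f)$ degenerates to the C-GF $\sum_{k=0}^{K-1}c_{0k}\bbS_f^k$, and stationarity follows at once from the generative characterization recalled from~\cite{marques2017stationary}; your covariance cross-check (note that $\bbg=\bbPsi_f\bbc_0=\bbp_0$) is then literally the paper's computation, so the two arguments meet there. The trade-off is worth spelling out: your main line needs the standing parametrization $\bbP=\bbPsi_f\bbC$ (in force by the paragraph preceding the propositions) in order to collapse the dual taps, whereas the paper's commutation argument holds for an arbitrary window $\bbp_0$ and is in that sense marginally more general; in exchange, your route exhibits the explicit order-$(K-1)$ dual filter directly from Theorem~\ref{theorem:one}, rather than deducing the existence of some polynomial filter from commutation after the fact. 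Finally, you are right --- and more explicit than the paper --- that both the step ``$\bbx$ white $\Rightarrow\hat{\bbx}$ white'' and \eqref{eq:covariance-gft} itself presuppose a unitary (orthonormal) $\bbV$; the paper uses this convention silently, so your flagging of it is a point in your favor rather than a gap.
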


\noindent From~\eqref{eq:covariance-gft}, we can see that if $\bbC_y$ is diagonal, then $\bbS_f$ commutes with $\bbC_{\hat{y}}$ and as such $\hat{\bby}$ is stationary on $\bbS_f$, with a power spectral density (in the primal domain) equal to the eigenvalues of $\bbC_{\hat{y}}$. This can happen only if $L=1$ and $\bbx$ is a white signal, where the convolution simply becomes $\bby= \Diag(\bbp_0)\bbx$, i.e., a windowing in the primal domain. The covariance matrix $\bbC_y$ is then $\bbC_y= \Diag(\bbp_0)^2$ and the cross correlation in $\bby$ is zero. In this case the covariance matrix $\bbC_{\hat{y}}$ of the process $\hat{\bby}$ in the dual domain is not diagonal in general, since it reads as:
 \begin{align}
 \label{eq:cov-GFT}
      \bbC_{\hat{y}}= \bbV^{-1} \Diag(\bbp_0)^2 \bbV.
\end{align}
From~\eqref{eq:cov-GFT}, we can then conclude that $\bby$ is non-stationary on $\bbS$ and, more importantly, $\bbS_f \bbC_{\hat{y}} = \bbC_{\hat{y}} \bbS_f$, i.e., $\bbS_f$ commutes with $\bbC_{\hat{y}}$. This implies that $\hat{\bby}$ can be expressed as the output of a node-invariant graph filter in the dual domain when excited with white input, i.e., $\hat{\bby}= \bbH(\hat{\bbp}, \bbS_f) \hat{\bbx}$ for some limited order filter coefficients $\hat{\bbp}$, rendering the estimation of the dual graph (see Section~\ref{sec:learning}) a node-invariant graph filter estimation problem~\cite{segarra2017optimal, natali2020topology}.
This extends the classical notion of windowing in time domain (for instance used in power spectral density estimation), which corresponds to a frequency-invariant convolution in the frequency domain. A generalization of this is given for $L>1$ and a general signal $\bbx$, for which Corollary~\ref{corollary:duality} applies.

\begin{proposition}
    If $L=1$ and $\bbx$ is a \textit{non-white} yet stationary graph signal with covariance $\bbC_x=\bbV \bbLambda_x \bbV^H$ , then $\hat{\bby}$ is not stationary on $\bbS_f$ since:
\begin{align}
    \bbC_{\hat{y}}= \bbH(\bbc; \bbS_f) 
 \bbLambda_x \bbH(\bbc; \bbS_f)^H
\end{align}
for some coefficients $\bbc \in \reals^K$, does not commute with $\bbS_f$.
\end{proposition}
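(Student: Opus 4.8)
The plan is to specialize the covariance calculus already set up in Property~1 and the propositions to the windowing case $L=1$, and then test commutation with $\bbS_f$. First I would note that with $L=1$ the type-I NV-GF collapses to $\bbH_I(\bbP,\bbS)=\Diag(\bbp_0)$, so $\bby=\Diag(\bbp_0)\bbx$ and, combining Property~1 with~\eqref{eq:covariance-gft}, $\bbC_{\hat{y}}=\bbV^{-1}\Diag(\bbp_0)\,\bbC_x\,\Diag(\bbp_0)^H\bbV$. Substituting the stationary covariance $\bbC_x=\bbV\bbLambda_x\bbV^H$ (which is meaningful precisely for the normal GSO assumed throughout these covariance arguments, where $\bbV^H=\bbV^{-1}$ and hence $\bbV^{-H}=\bbV$) and inserting the identity $\bbV\bbV^{-1}$ between factors, I would refactor this as $\bbC_{\hat{y}}=(\bbV^{-1}\Diag(\bbp_0)\bbV)\,\bbLambda_x\,(\bbV^{-1}\Diag(\bbp_0)\bbV)^H$.

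Next I would invoke the basis-expansion parametrization $\bbp_0=\bbPsi_f\bbc_0=\sum_{k=0}^{K-1}c_{0k}\bblambda_f^k$, so that $\Diag(\bbp_0)=\sum_{k=0}^{K-1}c_{0k}\bbLambda_f^k$ and therefore $\bbV^{-1}\Diag(\bbp_0)\bbV=\sum_{k=0}^{K-1}c_{0k}\bbS_f^k=\bbH(\bbc_0,\bbS_f)$. This is exactly the "windowing in the primal domain equals a C-GF in the dual domain" identity of Corollary~\ref{corollary:duality}, so I can assume it directly. Combining with the previous step gives the claimed form $\bbC_{\hat{y}}=\bbH(\bbc_0,\bbS_f)\,\bbLambda_x\,\bbH(\bbc_0,\bbS_f)^H$ with $\bbc=\bbc_0\in\reals^K$.

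Finally, to establish non-stationarity I would check commutation with $\bbS_f$. Since $\bbH(\bbc_0,\bbS_f)$ is a polynomial in $\bbS_f$ it commutes with $\bbS_f$, and because $\bbS_f$ is Hermitian its adjoint $\bbH(\bbc_0,\bbS_f)^H$ is likewise a polynomial in $\bbS_f^H=\bbS_f$ and commutes as well; sliding $\bbS_f$ through both outer factors collapses the commutator to $\bbC_{\hat{y}}\bbS_f-\bbS_f\bbC_{\hat{y}}=\bbH(\bbc_0,\bbS_f)\,[\bbLambda_x,\bbS_f]\,\bbH(\bbc_0,\bbS_f)^H$, where $[\bbLambda_x,\bbS_f]:=\bbLambda_x\bbS_f-\bbS_f\bbLambda_x$. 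The crux is that $\bbLambda_x$ is diagonal whereas $\bbS_f$ is generically not, so $[\bbLambda_x,\bbS_f]\neq\bb0$ outside the degenerate cases of a white input ($\bbLambda_x\propto\bbI$) or a diagonal dual shift; for generic, and in particular invertible, $\bbH(\bbc_0,\bbS_f)$ this nonzero commutator survives conjugation, whence $\bbC_{\hat{y}}$ does not commute with $\bbS_f$ and $\hat{\bby}$ is not stationary on the dual graph. The main obstacle is making this "in general" claim precise rather than asserting it: I would have to exhibit the exact exceptional cases and argue that when, say, $\bbLambda_x$ has distinct diagonal entries and $\bbS_f$ has a nonzero off-diagonal entry, the entrywise structure of $[\bbLambda_x,\bbS_f]$ forces it to be nonvanishing and is not annihilated by the surrounding invertible factors.
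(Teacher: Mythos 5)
Your proposal is correct and follows essentially the same route the paper intends: the paper states this proposition without a separate proof, and the claimed covariance form is exactly what you derive by combining Property~1, the relation $\bbC_{\hat{y}}=\bbV^{-1}\bbC_y\bbV$, and the windowing-equals-dual-C-GF identity $\Diag(\bbp_0)=\bbV\,\bbH(\bbc_0,\bbS_f)\,\bbV^{-1}$ (Corollary~\ref{corollary:duality} specialized to $L=1$), with $\bbc=\bbc_0$. Your closing computation $[\bbC_{\hat{y}},\bbS_f]=\bbH(\bbc_0,\bbS_f)\,[\bbLambda_x,\bbS_f]\,\bbH(\bbc_0,\bbS_f)^H$ and the attendant genericity caveats (non-white input so $\bbLambda_x\not\propto\bbI$, non-diagonal $\bbS_f$, invertible filter) are in fact more careful than the paper, which simply asserts the non-commutation.
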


\begin{proposition}
    If $\bbp_l = p_l \boldsymbol{1}, \; \forall \; l$, and $\bbx$ is not stationary on $\bbS$, then $\bby$ is not stationary on $\bbS$ and $\hat{\bby}$ is not stationary on $\bbS_f$. Indeed we have that the covariance matrix of $\bby$:
\begin{align}
    \bbC_y = \bbH(\bbp; \bbS) \bbC_x \bbH(\bbp; \bbS)^H
\end{align}
 does not commute with $\bbS$; likewise $\bbS_f$ does not commute with $\bbC_{\hat{\bby}}= \bbV^{-1} \bbC_\bby \bbV$.
\end{proposition}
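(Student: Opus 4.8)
The plan is to first notice that the hypothesis $\bbp_l = p_l\boldsymbol{1}$ collapses the type-I NV-GF into a \emph{classical} graph filter: since $\Diag(p_l\boldsymbol{1}) = p_l\bbI$, we get $\bbH_I(\bbP,\bbS) = \sum_{l=0}^{L-1} p_l\bbS^l = \bbH(\bbp,\bbS)$ [cf.~\eqref{eq:node-invariant}]. Write $\bbH := \bbH(\bbp,\bbS)$ and let $\bbh := \bbPsi\bbp = \sum_{l=0}^{L-1} p_l\bblambda^l$ be its frequency response, so that $\bbH = \bbV\Diag(\bbh)\bbV^{-1}$. The single structural fact driving the whole argument is that $\bbH$, being a polynomial in $\bbS$, commutes with $\bbS$; and, since the stationarity framework of this section takes $\bbS$ normal (equivalently $\bbV$ unitary, as implicitly used in~\eqref{eq:covariance-gft}), $\bbH^H = \bbV\Diag(\bbh^{*})\bbV^{-1}$ shares the eigenbasis of $\bbS$ and hence also commutes with $\bbS$. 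Throughout I write $[\bbA,\bbB] := \bbA\bbB - \bbB\bbA$.

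For the primal claim I would substitute $\bbC_y = \bbH\bbC_x\bbH^H$ into the commutator with $\bbS$ and push $\bbS$ through both $\bbH$ and $\bbH^H$:
\begin{align}
    [\bbC_y,\bbS] = \bbH\bbC_x\bbH^H\bbS - \bbS\bbH\bbC_x\bbH^H = \bbH\,[\bbC_x,\bbS]\,\bbH^H .
\end{align}
Because $\bbx$ is non-stationary on $\bbS$ we have $[\bbC_x,\bbS] \neq \bb0$, and whenever $\bbh$ has no vanishing entry the filter $\bbH$ (and thus $\bbH^H$) is invertible, so conjugation cannot annihilate a nonzero matrix and $[\bbC_y,\bbS] \neq \bb0$; hence $\bby$ is non-stationary on $\bbS$. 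I would remark that under invertibility this is in fact an equivalence: $\bby$ is stationary on $\bbS$ if and only if $\bbx$ is.

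For the dual claim I would combine $\bbC_{\hat{y}} = \bbV^{-1}\bbC_y\bbV$ [cf.~\eqref{eq:covariance-gft}] with $\bbS_f = \bbV^{-1}\bbLambda_f\bbV$, which holds since $\bbV_f = \bbV^{-1}$. Conjugating the commutator by $\bbV$ gives
\begin{align}
    [\bbC_{\hat{y}},\bbS_f] = \bbV^{-1}\big(\bbC_y\bbLambda_f - \bbLambda_f\bbC_y\big)\bbV = \bbV^{-1}\,[\bbC_y,\bbLambda_f]\,\bbV ,
\end{align}
so $\hat{\bby}$ is stationary on $\bbS_f$ exactly when $\bbC_y$ commutes with the diagonal $\bbLambda_f$, i.e. (for distinct dual frequencies $\bblambda_f$) exactly when $\bbC_y$ is diagonal. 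Since $\bbx$ is non-stationary (hence non-white), $\bbC_y = \bbH\bbC_x\bbH^H$ is generically a full, non-diagonal matrix, whence $[\bbC_{\hat{y}},\bbS_f] \neq \bb0$ and $\hat{\bby}$ is non-stationary on $\bbS_f$. The same conclusion follows from the consistency analysis: constant taps reduce the dual filter to the pure windowing $\hat{\bby} = \Diag(\bbPsi\bbp)\hat{\bbx} = \Diag(\bbh)\hat{\bbx}$, and a diagonal multiplication is generically not a polynomial in $\bbS_f$, so it does not preserve commutation with $\bbS_f$.

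The delicate point — and the reason the statement is hedged with ``in general'' — is that both conclusions are generic rather than universal, so I expect the genuine obstacle to be making the genericity precise, not the (immediate) commutator algebra. The primal part needs $\bbH$ invertible: a filter that nulls precisely the spectral components where $\bbC_x$ fails to commute with $\bbS$ could in principle return a stationary $\bby$. The dual part needs $\bbC_y$ to be genuinely non-diagonal and the entries of $\bblambda_f$ to be distinct; an accidental diagonal $\bbC_y$, or repeated dual eigenvalues, would allow stationarity on $\bbS_f$. I would therefore state the result for generic filter coefficients and generic $\bbC_x$, excluding these measure-zero degeneracies, which is exactly the force of the claim that the relevant covariance matrices ``do not commute.''
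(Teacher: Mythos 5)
Your proof is correct, and it actually supplies reasoning the paper omits: the paper states this proposition with no proof at all, remarking only that the result ``is not novel'' and is included for completeness, so the displayed covariance formula is the entirety of its justification. What you add is the explicit commutator algebra
\begin{align}
[\bbC_y,\bbS] \;=\; \bbH\,[\bbC_x,\bbS]\,\bbH^H,
\qquad
[\bbC_{\hat{y}},\bbS_f] \;=\; \bbV^{-1}\,[\bbC_y,\bbLambda_f]\,\bbV,
\end{align}
resting on exactly the two facts you isolate: $\bbH$ and $\bbH^H$ commute with $\bbS$ (the latter requiring normality of $\bbS$, which the paper does assume implicitly, since $\bbC_{\hat{y}}=\bbV^{-1}\bbC_y\bbV$ in~\eqref{eq:covariance-gft} is the true covariance of $\hat{\bby}$ only when $\bbV$ is unitary), and $\bbS_f=\bbV^{-1}\bbLambda_f\bbV$ via $\bbV_f=\bbV^{-1}$. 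Your genericity caveats are not mere hedging; they are substantively necessary, and this is where your treatment is more careful than the paper's flat assertion ``does not commute.'' For the primal claim, invertibility of $\bbH$ is genuinely needed. For the dual claim one can even construct a clean counterexample to the universal reading: take $\bbH$ invertible, let $\bbD$ be diagonal, PSD and non-scalar, and set $\bbC_x=\bbH^{-1}\bbD\bbH^{-H}$; then $[\bbC_x,\bbS]=\bbH^{-1}[\bbD,\bbS]\bbH^{-H}\neq\bb0$ generically (so $\bbx$ is non-stationary on $\bbS$), yet $\bbC_y=\bbD$ is diagonal, hence $\bbC_{\hat{y}}=\bbV^{-1}\bbD\bbV$ commutes with $\bbS_f$ and $\hat{\bby}$ \emph{is} stationary on the dual graph while $\bby$ remains non-stationary on the primal one. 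This confirms your point that the primal and dual non-commutation claims are logically independent and each only generic. One minor quibble: your closing remark that windowing ``is generically not a polynomial in $\bbS_f$'' is heuristic and not load-bearing — the relevant question is whether $\Diag(\bbPsi\bbp)\,\bbC_{\hat{x}}\,\Diag(\bbPsi\bbp)^H$ commutes with $\bbS_f$ — but since your main argument already establishes the dual claim, nothing depends on it.
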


\sn In other words, a node-invariant graph convolution of a non-stationary process $\bbx$ results in a non-stationary process $\bby$ on $
\bbS$ (and non-stationary GFT $\hat{\bby}$ on $\bbS_f$). While this result is not novel, we include it here to ensure a comprehensive coverage.

\sn
The introduced propositions provide a way to artificially generate non-stationary graph signals on a given GSO $\bbS$ by filtering white noise, as explained next.

\sn
\textbf{Generating non-stationary graph signals.} 
 Remember that $\bbC_{\hat{\bby}}= \bbV^{-1} \bbC_{\bby} \bbV$ has to be diagonal for $\bby$ to be stationary on $\bbS$. Thus, non-stationary graph signals on $\bbS$ can be generated as long as $\bbC_{\bby}$ is not diagonalizable by $\bbV$ (which would render $\bbC_{\hat{\bby}}$ diagonal). In particular, if we want our GFT random process $\hat{\bby}$ to have a specific covariance matrix equal to $\bbC_{\hat{\bby}}= \bbV^{-1} \bbC_\bby \bbV$, for some positive semidefinite (PSD) $\bbC_\bby$ we can generate random samples $\hat{\bby}$ as:
\begin{align}
\label{eq:random_GFT}
    \hat{\bby}= \bbV^{-1} \bbR \bbV \hat{\bbx}
\end{align}
where $\bbR $ is a matrix such that $\bbC_y = \bbR \bbR^H$ and $\hat{\bbx}$ is a white input; equivalently, in the primal domain:
\begin{align}
\label{eq:random_y}
    \bby= \bbR \bbx.
\end{align}
A simple example of such generation process is given by setting $\bbC_\bby= \Diag(\bbp)$ for some $\bbp \succcurlyeq \bb0$, so that~\eqref{eq:random_y} is a windowing operation in the primal domain, corresponding to a node-invariant graph convolution [cf. \eqref{eq:random_GFT}] in the dual domain. With this choice, however,  $\hat{\bby}$ is stationary on $\bbS_f$ [cf. Proposition~\ref{prop:yhat-stationary-on-dual}]. If this is not sought-after, a  general  non diagonal PSD $\bbC_\bby$ should be used.

%
%


\section{Dual Graph Identification}
\label{sec:learning} 
So far we have assumed the knowledge of $\bbS_f$.
In this section, we put forth a data-driven procedure to learn the dual graph eigenvalues $\bblambda_f$ in such a way that the resulting graph $\bbS_f$ respects the theory developed in Section~\ref{sec:theorem}. For simplicity, we  restrict our attention to the case of an undirected primal graph with real-valued GSO $\bbS$ and real-valued graph signals and filter coefficients. The problem setting is the following: consider $T$ graph signals $\bbY=[\bby_1, \ldots, \bby_T]$ which can be modelled as non-stationary on the graph $\bbS$ as the result of filtering $T$ (possibly unknown) input graph signals $\bbX=[\bbx_1, \ldots, \bbx_T]$ with a NV-GF $\bbH_I(\bbP,\bbS)$, i.e., $\bbY=\bbH_I(\bbP,\bbS)\bbX$. In particular, similar to temporal signal processing, we assume that the orders $K$ and $L$ are much smaller than $N$. Then the problem of identifying the dual graph can be formalized as follows:

\sn
\begin{tcolorbox}
   \textit{Given $\bbY$ (and possibly $\bbX$), find a dual graph $\bbS_f$ which is consistent with Theorem~\ref{theorem:one}. In other words, find the dual eigenvalues $\bblambda_f$ such that a NV-GF $\bbH_I$ on $\bbS$ with $L \ll N$ can be expressed as a NV-GF $\bbH_{II}$ on $\bbS_f$ with $K \ll N$.}
\end{tcolorbox}

\sn 
To tackle this problem, we adopt a two-step approach: in  \textit{step i)} we learn the filter taps $\bbP$ of the NV-GF $\bbH_I(\bbP,\bbS)$ which best fit the available data, developing two distinct approaches for the input-output and output-only scenarios; in the second \textit{step ii)} we find the dual eigenvalues $\bblambda_f$ by exploiting~\eqref{eq:duality-coefficients}, i.e., we fit the model $\bbP = \bbPsi_f \bbC$, which is a specific structured matrix factorization with a Vandermonde factor. We solve this problem by following the recently proposed approach in~\cite{natali2023blind} relying on a subspace method followed by successive convex programming.

\subsection{Graph Filter Estimation with Input-Output Data}
\label{subsec:learn-P}
The goal of this section is to estimate the graph filter coefficients $\bbP$ from a set of data pairs $\ccalD= \{(\bbx_t, \bby_t) \vert \; t= 1, \ldots, T\}$, all obtained using the same node-varying graph filter. By vectorizing the expression $\bbY=\bbH_I(\bbP,\bbS)\bbX$ we obtain:
\begin{align}
\label{eq:vectorization-P-estimation}
    \bby&= \sum_{l=0}^{L-1} \nonumber\vec(\Diag(\bbp_l)\bbS^l\bbX)\\
    \nonumber &= \sum_{l=0}^{L-1}((\bbS^l\bbX)^\top \circ \bbI_N) \bbp_l\\
    \nonumber &= [\bbX^\top \circ \bbI_N \cdots (\bbS^{L-1}\bbX)^\top \circ \bbI_N ] \vec(\bbP)\\
    &= \bbA \vec(\bbP)
\end{align}
where $\bby=[\bby_1, \ldots, \bby_T]^\top$,   $\bbA=[\bbX^\top \circ \bbI_N \cdots (\bbS^{L-1}\bbX)^\top \circ \bbI_N ]$ and $\circ$ denotes the Khatri-Rao product. An estimate  of $\bbP$ can then be obtained as:
\vspace{-.2cm}
\begin{align}
\label{eq:pseudo-inverse}
    \bbP= \operatorname{unvec}(\bbA^\dagger \bby),
\end{align}
where $\bbA^\dagger= (\bbA^H \bbA)^{-1}\bbA^H$ is the pseudo-inverse of $\bbA$.


\subsection{Graph Filter   Estimation with Output-only Data}
\label{subsec:learn-P-output-only}

The goal of this section is to estimate the graph filter coefficients $\bbP$ with the only knowledge of the output graph signals $\bbY \in \reals^{N \times T}$. We assume that each $\bby_t$ can be modelled as the output of a NV-GF when excited with a white input $\bbx_t~\sim ~ \ccalN(\boldsymbol{0}, \bbI)$, which is not directly observable. A viable approach is then to fit the (empirical) second order information of the process which, together with the filter parametrization \eqref{eq:node-variant}, leads to the following optimization problem:
\begin{align}
\label{eq:main-problem}
    \min_{\bbP} \|\hat{\bbC}_y - \bbH_I(\bbP, \bbS)\bbH_I(\bbP, \bbS)^\top\|_F^2
\end{align}
where $\hat{\bbC}_y= (1/T)\bbY\bbY^\top$ is the sample covariance matrix ($\bbY$ is already centered).

Problem~\eqref{eq:main-problem} is non-convex in $\bbP$; thus, to alleviate the non-convexity, we consider instead the simpler (yet, again non-convex) problem:
\begin{align}
\label{eq:main-problem-approximated}
    &\min_{\bbP, \bbU} \|\bbR - \bbH_I(\bbP, \bbS)\bbU\|_F^2 \quad \text{s.t.} \; \bbU^\top\bbU=\bbI
\end{align}
where $\bbR$ is a square matrix such that $\hat{\bbC}_y= \bbR\bbR^\top$, and $\bbU$ is an $N \times N$ orthogonal matrix. By exploiting the SVD of the matrix $\bbY$, i.e.,  $\bbY=\bbU_y\bbLambda_y\bbV_y^\top $,  possible choices for $\bbR$ are $\bbR=(1/\sqrt{T})\bbU_y\bbLambda_y \bbU_y^\top$ and $\bbR=(1/\sqrt{T})\bbU_y\bbLambda_y$. 
Notice that if  \eqref{eq:main-problem} has a solution, then \eqref{eq:main-problem-approximated} has a solution.

\sn
Despite the fact that  \eqref{eq:main-problem-approximated} is not jointly convex  in $\bbP$ and $\bbU$, it is convex in $\bbP$ for a fixed $\bbU$; moreover, for a fixed $\bbP$, it reduces to the well-studied orthogonal Procrustes problem~\cite{green1952orthogonal}, for which a closed form solution exists albeit its non-convexity. Thus, an alternating minimization over $\bbP$ and $\bbU$ can be put forth, as follows: \textit{a)} first, given the estimate of $\bbU$ at the $(n-1)$th iteration, i.e., $\bbU^{(n-1)}$, the estimation problem at the $n$th iteration for the filter taps matrix $\bbP$ reads as:
    \begin{align}
\label{eq:step-A}
    &\bbP^{(n)}=\argmin_{\bbP} \|\bbR - \bbH_I(\bbP, \bbS)\bbU^{(n-1)}\|_F^2, 
    \tag{36a}
\end{align}
which can be solved in closed form by considering $\bbY = \bbR$ and $\bbX=\bbU^{(n-1)}$ in~\eqref{eq:vectorization-P-estimation}. The solution of \eqref{eq:step-A} is then  \textit{b)} used in the next step to refine the estimate of the unitary matrix $\bbU$, i.e.,:
        \begin{align}
\label{eq:step-B}
    \bbU^{(n)}= \argmin_{\bbU} & \|\bbR - \bbH_I(\bbP^{(n)}, \bbS)\bbU\|_F^2 \nonumber \\ 
    & \text{s.t.} \; \bbU^\top\bbU=\bbI
    \tag{36b}
\end{align}
for which the closed form solution is $\bbU^{(n)} = \bbV_p\bbU_p^\top$, where $\bbU_p$ and $\bbV_p$ are the left and right singular vector matrices, respectively,  of the matrix product $\bbR^\top \bbH_I(\bbP^{(n)}, \bbS)$, that is,  $\bbR^\top \bbH_I(\bbP^{(n)}, \bbS) = \bbU_p \bbLambda_p \bbV_p^\top$, with $\bbLambda_p$ the matrix of singular values. The solution of~\eqref{eq:step-B} is then fed again into~\eqref{eq:step-A}, unless a predefined number of iterations or stopping criterion is reached.

\subsection{Dual Graph Frequency Estimation}
\label{subsec:learn-lambda}

The next step is to learn the dual graph frequencies $\bblambda_f$, which are the only unknowns of the dual GSO $\bbS_f$. In order to do this, consider again the matrix form of~\eqref{eq:primal-coefficients}:
\begin{align}
\label{eq:matrix-factorization}
    \bbP=\bbPsi_f(\bblambda_f)\bbC 
\end{align}
where we explicitly wrote $\bbPsi_f$ as a function of $\bblambda_f$ to highlight the fact that the matrix is entirely determined by its second column $\bblambda_f$, representing our unknown. Then, the problem we aim to solve can be formally stated as follows:

\sn
\begin{tcolorbox}
    \textit{Given the matrix $\bbP \in \reals^{N \times L}$, recover the input vector $\bblambda_f \in \reals^N$ and the coefficient matrix $\bbC \in \reals^{K \times L}$ such that~\eqref{eq:matrix-factorization} holds as accurately as possible.}
\end{tcolorbox}

\sn
Although the problem can be approached from a pure algebraic point of view as a structured matrix factorization,  a pleasing geometrical interpretation of~\eqref{eq:matrix-factorization} is given by interpreting the vectors $\bbp_0, \ldots, \bbp_{L-1}$  as function values obtained by sampling $L$ distinct polynomials $p_0(\lambda_f), \ldots, p_{L-1}(\lambda_f)$, all with  degree $K-1$, in the same $N$ unknown locations $\lambda_{f,0}, \ldots, \lambda_{f, N-1}$. The goal is to recover the original locations (and polynomial coefficients) from the available sampled function values. The only side information we have about these sampling points is \textit{i)} which function they belong to and \textit{ii)} that they are ordered in such a way that the related sampling points are aligned. See~\cite{natali2023blind} and the Supplemental Material for an illustration.

\sn
\textbf{Subspace Fitting.}
Assume we start by considering the following optimization problem to estimate $\bblambda_f$ and $\bbC$:
\begin{align}
\label{eq:original}
    \min_{\bblambda_f, \bbC} \;\frac{1}{2} \|\bbP - \bbPsi_f(\bblambda_f)\bbC\|_F^2,
\end{align}
which can be solved, for instance, with an alternating minimization approach, without guarantee of convergence to a global optimum.

\sn
An alternative subspace method can be devised when $L \geq K$. We then consider the economy-size SVD of matrix $\bbP$, i.e.,  $\bbP=\bbU\bbSigma\bbZ^\top$, where $\bbU \in \reals^{N \times K}$ and $\bbZ \in \reals^{L \times K}$  are the left and right singular vectors, respectively, and $\bbSigma \in \reals^{K \times K}$ is the diagonal matrix of singular values. Since both $\bbPsi_f(\bblambda_f)$ and  $\bbU$ represent a basis for the column space of $\bbP$, there exists a non-singular matrix $\bbQ \in \reals^{K \times K}$ such that 
$\bbPsi_f=\bbU\bbQ$. The subspace fitting~\cite{viberg1991sensor} problem reads then as:
\begin{align}
\label{eq:subspace-fitting}
    \min_{\bblambda_f, \bbQ}\;\frac{1}{2} \|\bbPsi_f(\bblambda_f) - \bbU\bbQ\|_F^2,
\end{align}
which, upon substituting the pseudoinverse solution  $\bbQ=\bbU^\dagger\bbPsi_f$ into~\eqref{eq:subspace-fitting}, can be casted as the following equivalent problem:
\begin{align}
\label{eq:equivalent-subspace-fitting}
    \min_{\bblambda_f}\; \{ f(\bblambda_f):=\frac{1}{2} \|\boldsymbol{\Pi}\bbPsi_f(\bblambda_f)\|_F^2\},
\end{align}
with $\boldsymbol{\Pi}:=\bbI_N - \bbU\bbU^\dagger$ the orthogonal projection matrix onto the orthogonal complement  of $\bbU$.
In other words, problem~\eqref{eq:equivalent-subspace-fitting} aims to find a vector $\bblambda_f$ such that the Vandermonde matrix $\bbPsi_f(\bblambda_f)$ is orthogonal to the subspace spanned by the orthogonal complement of $\bbU$. Notice that we require $L \geq K$, so that the matrix $\bbP$ can have rank $K$ revealing the subspace of the Vandermonde matrix $\bbPsi_f(\bblambda_f)$ we want to estimate.

%

\sn
Problem~\eqref{eq:subspace-fitting} is not convex in $\bblambda_f$ due to the polynomial degree $K$ (unless $K=2$, i.e., the model in~\eqref{eq:matrix-factorization} is linear in $\bblambda_f$). To tackle the non-convexity of the problem, we resort to sequential convex programming (SCP)~\cite{boyd2008sequential}, a local optimization method that leverages convex optimization, where the non-convex portion of the problem is modeled by convex functions that are (at least locally) accurate. As in any non-convex problem, the initial starting point plays a big role; thus, if no prior information on the variable is given, a multi-starting point approach is advisable. The SCP formulation can be found in~\cite{natali2023blind} and in the Supplemental Material (and potentially in the final version of this manuscript).

\begin{remark}
\label{remark:music}
The nature of the problem and the formulation~\eqref{eq:equivalent-subspace-fitting} share similarities with the MUSIC algorithm~\cite{schmidt1986multiple}. There, however, the problem considers $\bbPsi_f^\top$ instead of $\bbPsi_f$ and $N$ independent $1$-dimensional searches can be carried out to find $\bblambda_f$ (which would be contained in the second row of $\bbPsi^\top$). In our case, each column contains all the $N$ variables and an $N$-dimensional search is needed, rendering a ``scanning'' of the vector variable $\bblambda_f$ infeasible, unless $N$ is very small.
\end{remark}

\subsection{Ambiguities}
\label{sec:ambiguities}
Notice that \eqref{eq:matrix-factorization} is not free of model ambiguities, since different pairs $(\bblambda_f,\bbC)$ may lead to the same observation matrix $\bbP$. Because we assume that $\bbP$ has rank $K$, if $\bbPsi_f$ and $\bbC$ are the true matrix factors satisfying~\eqref{eq:matrix-factorization}, then for any $K \times K$ invertible matrix $\bbT$, it holds:
\begin{align}
\label{eq:model-ambiguity}
    \bbP= \bbPsi_f\bbT\bbT^{-1}\bbC= \bbPsi_f^\prime \bbC^\prime.
\end{align}
However, $\bbPsi_f^\prime$ needs to be Vandermonde in order for~\eqref{eq:model-ambiguity} to respect the model structure in~\eqref{eq:matrix-factorization}. As we proved in~\cite[Theorem 2]{natali2023blind}, matrix $\bbT$ needs in general to be a Pascal matrix.
The consequence of such theorem is that without any added constraint on $\bbPsi_f$ or $\bbC$ in~\eqref{eq:matrix-factorization},  every shifted and scaled version of the groundtruth parameter $\bblambda_f$ (for an appropriate $\bbC$), perfectly fits the observation model and is a valid solution for~\eqref{eq:equivalent-subspace-fitting}. This is satisfactory for our purpose: a shift and scale of the graph eigenvalues maintains the same topological
structure of the original graph (removing the self loops caused by the shift); we will illustrate this in  Section~\ref{sec:sims}.

\section{Numerical Experiments}
\label{sec:sims}

In this section we perform numerical experiments relying on the theory outlined in Section~\ref{sec:learning}. Specifically, we first test our learning algorithm on synthetic data to assess the validity of the approach and its drawbacks; finally, we use it on real data.

\subsection{Synthetic Data}
\label{sec: synthetic}
The goal of this section is to validate the approach outlined in Section~\ref{sec:learning}, by assuming the knowledge of the dual eigenvalues $\bblambda_f$, which however is not used during the training phase but only used to compute the performance metrics. We have the following \textit{generation} and \textit{learning} phases (also depicted in the Supplemental Material for visual clarity):

\sn
\textbf{1) Generation.}
 \begin{enumerate}
     \item[(a)] \textbf{Dual eigenvalues} $\bblambda_f$: in order to have control over the spacing between different eigenvalues, we generate them with the following strategy. First, we uniformly sample the eigenvalue domain leading to the grid $\bbu:=[u_0, u_1, \ldots, u_{N-1}]^\top$, where the distance among two samples $P:= u_n - u_{n-1}$ is constant for all $n$. Then, the actual eigenvalues $\bblambda_f$ are generated as $\lambda_{f, n} = u_n + j_n$ where $ j_n \sim \ccalT\ccalN(0, (\delta P/2)^2)$, with $\ccalT\ccalN(0,\sigma_j^2)$ a Gaussian distribution with zero mean and standard deviation $\sigma_j$, yet truncated at $\sigma_j$. The (jitter) parameter $\delta>0$ specifies the randomness of the eigenvalues. If $\delta < 1$ the eigenvalues maintain the same ranking order as the uniform samples, while if $\delta \geq 1$ they can overlap to the adjacent ones. Notice that in this synthetic setup we are not interested in creating a ``meaningful'' dual graph, but rather assessing whether our algorithmic routine is able to identify such dual graph.

     \item[(b)] \textbf{Filter taps} $\bbP$: we generate the primal node-varying filter taps $\bbP \in \reals^{N \times L}$ as $\bbP= \bbPsi_f \bbC$, with $\bbPsi_f$  the Vandermonde matrix associated to $\bblambda_f$ generated in step (a), and $\bbC \in \complex^{K \times L}$  a random expansion coefficient matrix [cf.~\eqref{eq:duality-coefficients}] generated as ${\rm vec}({\bf C}) \sim \ccalN(\boldsymbol{0}, \bbI_{KL})$. To increase the curvature of the considered polynomials (thus avoiding to have almost flat curves in the domain of interest), we perform an  extra weighting scheme by increasing the weight of higher order monomials; that is, we multiply the matrix $\bbC$ with a mask matrix as $\bbC \leftarrow [\boldsymbol{1}, 2 \boldsymbol{1}, \ldots, K \boldsymbol{1}]^\top \odot \bbC$, where the $i$th column of the mask matrix is the constant vector containing in all its entries the value $i$. 

     \item[(c)] \textbf{Input-Output Data} $\bbX, \bbY$: we generate $T$ input graph signals $\bbx_t \sim \ccalN(\boldsymbol{0}, \bbI_N)$  and stack them in the matrix $\bbX=[\bbx_1, \ldots, \bbx_T]$. Then, we filter $\bbX$ with the type-I NV-GF $\bbH_I(\bbP, \bbS)$ to obtain $T$ new  (possibly noisy) graph signals $\bbY=[\bby_1, \ldots, \bby_T]= \bbH_I(\bbP, \bbS)\bbX + [ {\bf n}_1, \dots, {\bf n}_T ]$, with $\bbn_t \sim \ccalN(\boldsymbol{0}, \sigma^2\bbI_N)$ the measurement noise.
 \end{enumerate}

 \sn
 \textbf{2) Learning.} The goal in this phase is to recover the original $\bblambda_f$ of step 1(a) from the input-output data $\{\bbX, \bbY\}$. This is achieved with the algorithmic routine introduced in Section~\ref{subsec:learn-P} and Section~\ref{subsec:learn-lambda}; namely:
 \begin{enumerate}
     \item[(a)] \textbf{Filter Taps $\Tilde{\bbP}$}: we find an estimate $\Tilde{\bbP}$ of the filter taps $\bbP$ through~\eqref{eq:pseudo-inverse};

     \item[(b)] \textbf{Dual eigenvalues} $\Tilde{\bblambda}_f$: 
 we find an estimate $\Tilde{\bblambda}_f$ of $\bblambda_f$ with the procedure outlined in Section~\ref{subsec:learn-lambda}, namely, casting the problem as a subspace fitting problem and solving it with SCP. During this step, we also estimate the coefficient matrix $\bbC$ of the expansion model leading to $\tilde{\bf C}$.
 \end{enumerate}

 \begin{figure*}
    \begin{subfigure}{0.33\textwidth}
        \includegraphics[scale=0.43]{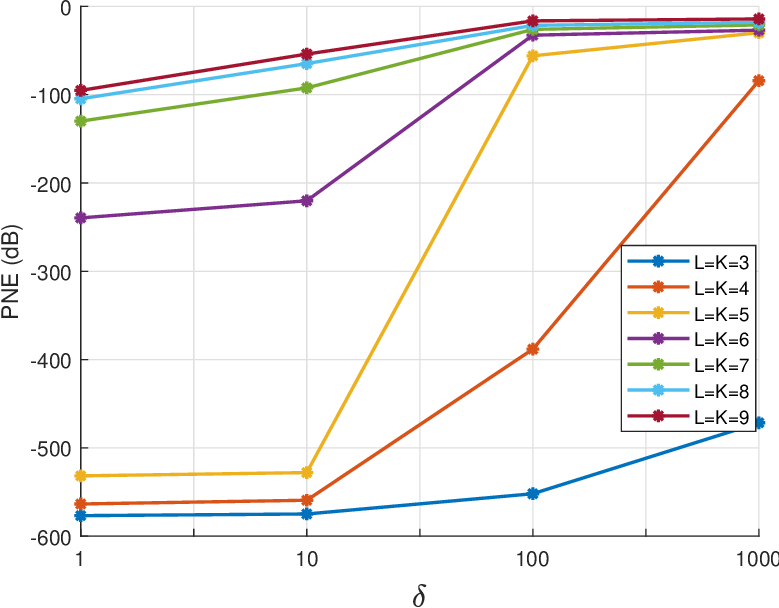}
    \end{subfigure}%
    \begin{subfigure}{0.33\textwidth}
        \includegraphics[scale=0.43]{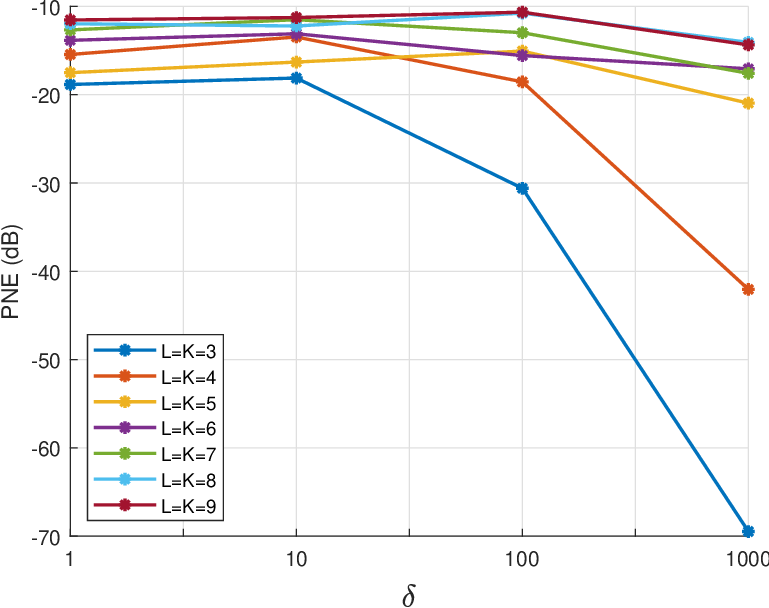}
    \end{subfigure}%
    \begin{subfigure}{0.33\textwidth}
        \includegraphics[scale=0.43]{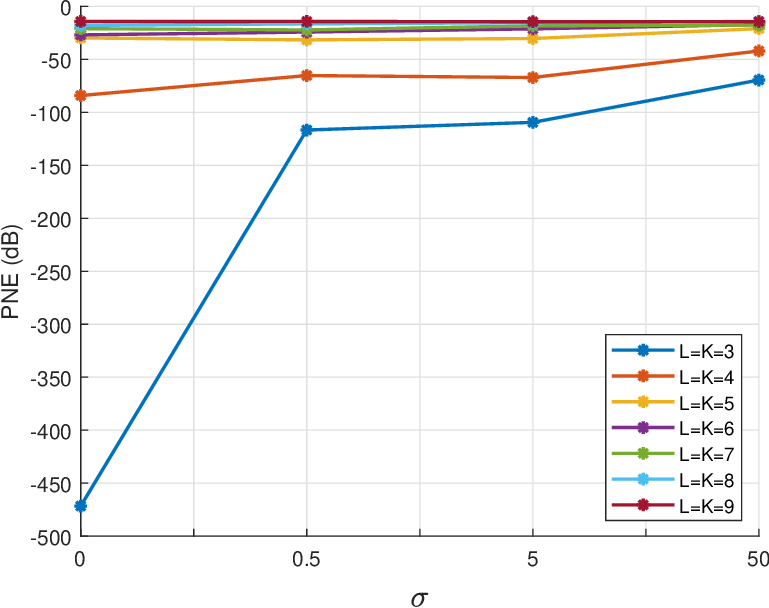}
    \end{subfigure}%
    \caption{PNE (in dB) as a function of $\delta$ with $\sigma=0$ (left) and $\sigma=50$ (center), for different values of $L,K$; and  PNE (in dB) as a function of $\sigma$ for $\delta=1000$ (right).}
    \label{fig: PNE_as_function}
\end{figure*}

\sn 
\textbf{Metrics.}
To assess the validity of the proposed approach, we consider two different performance metrics, one relative to the estimation of the filter coefficients $\bbP$ and one relative to the estimation of the dual eigenvalues $\bblambda_f$. For the filter coefficients, we consider the normalized squared error (NSE), computed as:
\begin{align}
\label{eq:NSE}
    \text{NSE}(\Tilde{\bbP}, \bbP) = \frac{\|\Tilde{\bbP}-  \bbP\|_F^2}{\|\bbP\|_F^2}.
\end{align}
For the dual eigenvalues, recall that we can recover the solution of problem~\eqref{eq:equivalent-subspace-fitting} up to a shift and scaling of the true positions [cf. Section~\ref{sec:ambiguities}]. Thus, as performance metric, we use  the normalized error modulo Pascal (PNE), defined as:
\begin{align}
\label{eq:pascal-error}
    \text{PNE}(\Tilde{\bblambda}_f, \bblambda_f)= \min_{t_0, t_1} \frac{\| \bblambda_f - (t_0 \boldsymbol{1} + t_1 \Tilde{\bblambda}_f)\|_2^2}{\|\bblambda_f\|_2^2}
\end{align}
  which measures how far the true eigenvalues are from a linear transformation of the recovered estimates.
 Clearly~\eqref{eq:pascal-error} is zero whenever $\Tilde{\bblambda}_f$ is a solution for~\eqref{eq:equivalent-subspace-fitting}.

 \sn
\textbf{Results.} We generate our primal graph $\bbS$ as a random sensor network with $N=40$ nodes, and set $u_0= -1$ and $u_{N-1}= +1$ [cf.~1(a)]. We run the algorithm for different parameter configurations, specifically: the order of the filter in the primal domain $L \in \{2, \ldots, 9\}$, which is also the polynomial degree of the primal eigenvalues (cf.~\eqref{eq:dual-node-variant}); the order of the filter in the dual domain $K \in \{2, \ldots, 9\}$, with $K \leq L$, which is also the polynomial degree of the dual eigenvalues; the jitter parameter $\delta \in \{1, 10, 100, 1000\}$; and the noise standard deviation $\sigma \in \{0, 0.5, 5, 50\}$. The number of input (and output) graph signals is set to $T=3000$. Due to the non-convexity of the cost function~\eqref{eq:equivalent-subspace-fitting}, we run the algorithm with $5$ different starting points $\bblambda_f^0$, one of which is the uniform grid $\bbu$; this, together with the jitter parameter $\delta$ [cf. Generation (a)] helps us also understanding the ``magnitude'' of the objective function's non-convex landscape: if the objective function is highly non-convex, even an initial starting point $\bblambda_f^0$ close to the real $\bblambda_f$ (meaning a small $\delta$) might incur a very high objective value and likely end-up in a local minimum. In such a case, a random starting point might be beneficial. The magnitude of non convexity of the objective function increases by increasing $K$. Finally, we compute the performance metrics relative to the solution associated to the different starting points.

\sn
In Fig.~\ref{fig: PNE_as_function} we show the PNE (in dB) as a function of $\delta$ for different $L=K$\footnote{For all the plots we set $L=K$ for visualization clarity. To reproduce the experiments with different parameter settings, we make available the source code:  \url{https://github.com/albertonat/genConv}}, in the noiseless case (left figure) and noisy case (middle figure); in addition, we show (right) how the PNE varies as a function of the noise $\sigma$ for fixed $\delta = 1000$. We can make the following observations: as expected, for low-degree polynomials, the algorithm performs better, since the non-convexity of the problem increases with increasing polynomial order. This is visible from the left and middle figure: for a small perturbation $\delta$, also high orders yield a good performance; which degrades by increasing $\delta$. However, when noise is present in the observations, a  random initial starting points (i.e. higher $\delta$) seems to be beneficial. Moreover, noise in the measurement (right figure) has obviously a negative impact in the learning performance, which however enables us to reconstruct the graph as we can see next.


 \begin{figure*}[]
\begin{subfigure}{0.33\textwidth}
    \centering
\includegraphics[scale=0.45, trim= 0cm 0cm 0cm 0cm, clip=true ]{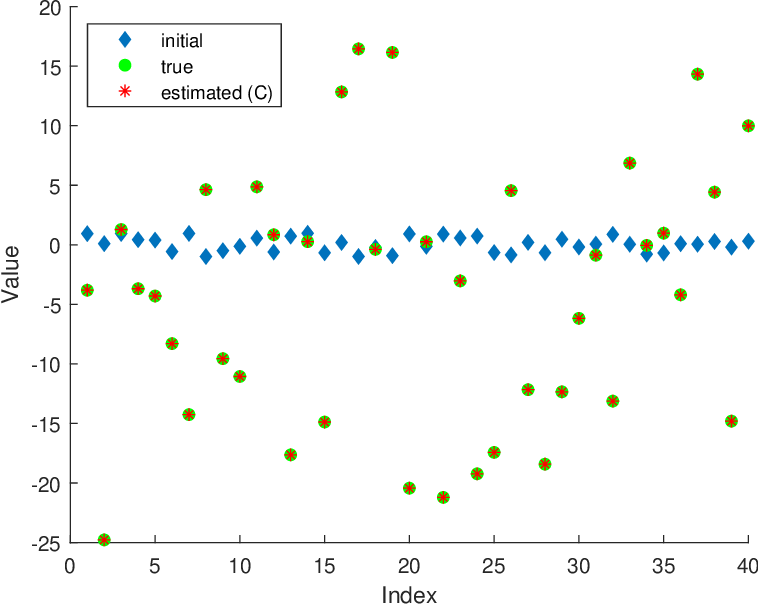}
\end{subfigure}%
\begin{subfigure}{0.33\textwidth}
    \centering
\includegraphics[scale=0.45,  trim= 0cm -.5cm 0cm 0cm, clip=false]{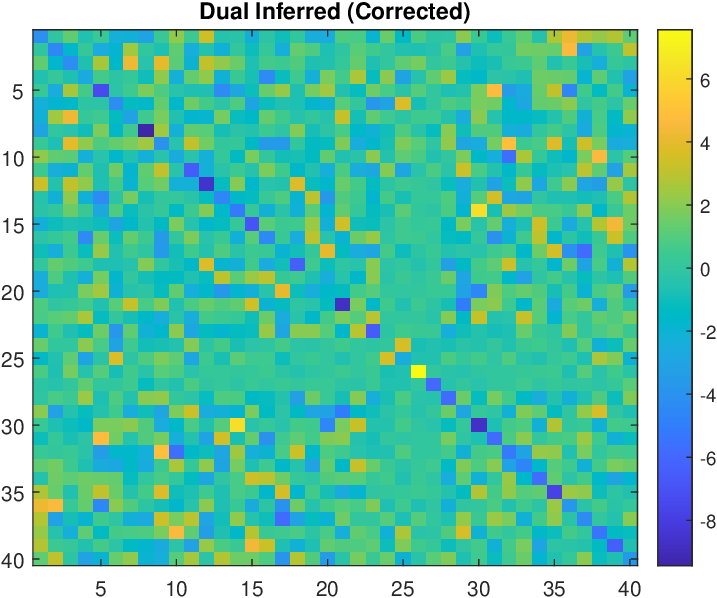}
\end{subfigure}%
\begin{subfigure}{0.33\textwidth}
\centering
\includegraphics[scale=0.45, trim= 0 -.5cm 0cm 0cm, clip=false ]{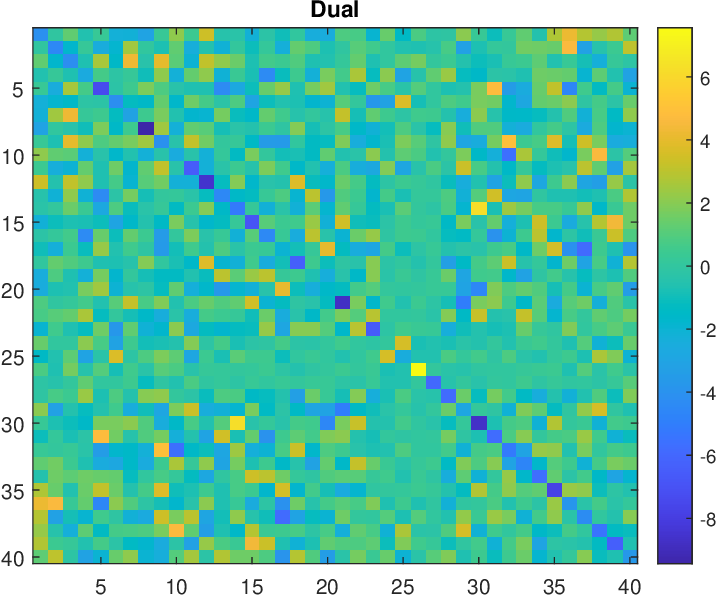}
\end{subfigure}%
\\
\begin{subfigure}{0.33\textwidth}
    \centering
    \includegraphics[scale=0.45, trim= 0cm 0cm 0cm 0cm, clip=true ]{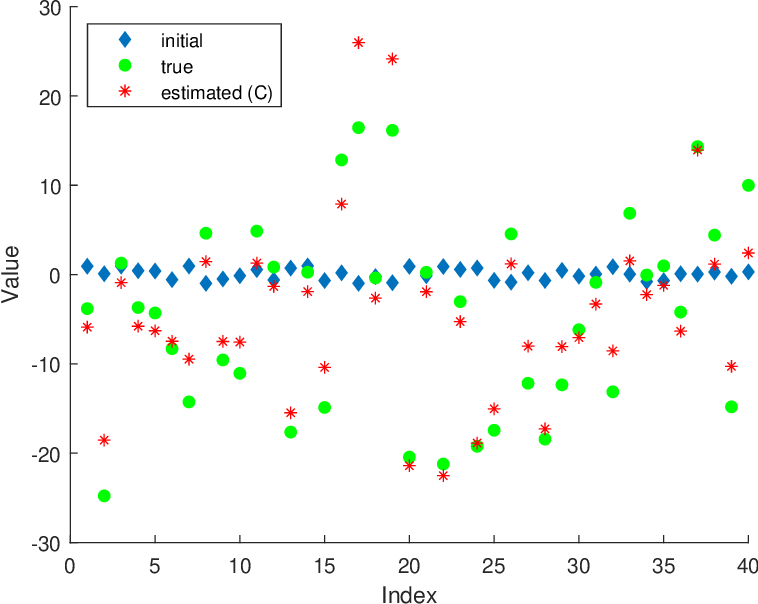}
\end{subfigure}%
\begin{subfigure}{0.33\textwidth}
    \centering
    \includegraphics[scale=0.45, trim= 0 -.5cm 0cm 0cm, clip=false]{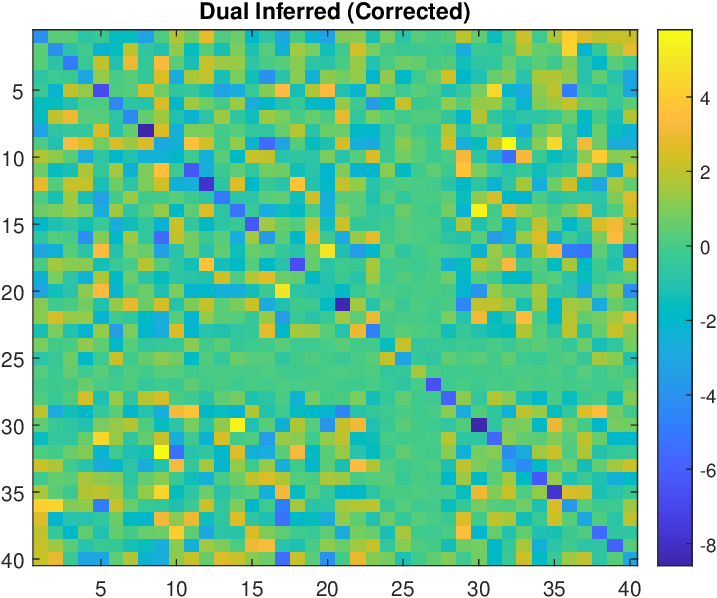}
\end{subfigure}%
\begin{subfigure}{0.33\textwidth}
\centering
\includegraphics[scale=0.45, trim= 0 -.5cm 0cm 0cm, clip=false ]{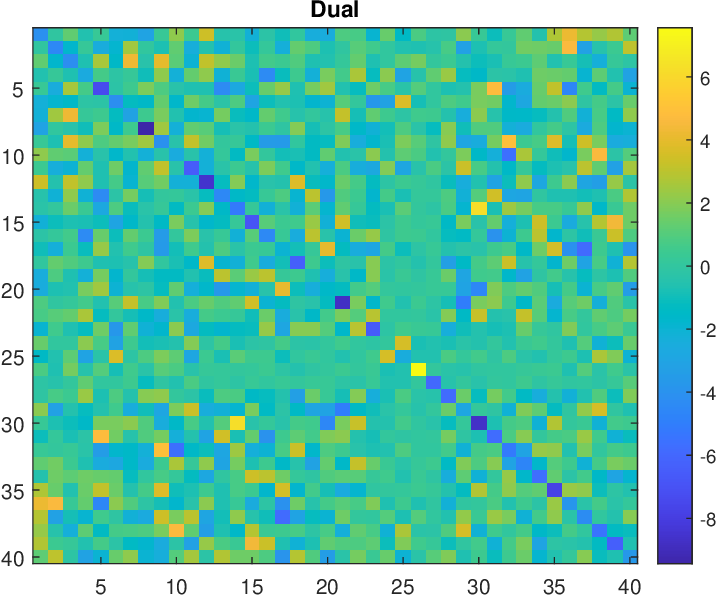}
\end{subfigure}%
\vspace{-0.1cm}
\caption{(Noiseless case $\sigma=0$) Results for $L=K=3$ (top row) and $L=K=9$ (bottom row) with $\delta = 1 \times 10^4$. (Left Column) True eigenvalues (green circles),
inferred eigenvalues (red crosses) and initial starting point of the algorithm (blue diamonds); (Center Column) Inferred dual GSO with ambiguity-correction; (Right Column) True dual GSO $\bbS_f$.}
\label{fig:noiseless}
\vspace{-.1cm}
\end{figure*}

\sn
To visually assess the algorithm's performance, in Fig.~\ref{fig:noiseless} we show, for the noiseless case, the ambiguity-corrected estimated eigenvalues $\Tilde{\bblambda}_f^c$ (red crosses), together with the original eigenvalues $\bblambda_f$ (green circles) and the initial starting point of the algorithm $\bblambda_f^0$ (blue diamonds) for $L \! =\!K\!=\!3$ (top row) and $L \! = \! K\!=\!9$ (bottom row), both with $\delta \! = 1 \!\times\!10^4$, which corresponds to a completely random configuration of the eigenvalues. The ambiguity correction is explicitly performed as:
\vspace{-.2cm}
\begin{align}
    \Tilde{\bblambda}_f^c= [\boldsymbol{1}  \; \Tilde{\bblambda}_f] [\boldsymbol{1}  \; \Tilde{\bblambda}_f]^\dagger  \bblambda_f,
\end{align}
 which is the closest point to $\bblambda_f$ up to a linear transformation dictated by the optimal $t_0$ and $t_1$ minimizing~\eqref{eq:pascal-error}.

\sn
In the $L=K=3$ case, the NSE is $\text{NSE}(\Tilde{\bbP}, \bbP)=2.41 \times 10^{-29} $  and the PNE is $\text{PNE}(\Tilde{\bblambda}_f, \bblambda_f)= 3.34 \times 10^{-25}$; in the $L=K=9$ case the NSE is $\text{NSE}(\Tilde{\bbP}, \bbP)= 9.20 \times 10^{-23}$ and the PNE is $\text{PNE}(\Tilde{\bblambda}_f, \bblambda_f)= 1.03 \times 10^{-1}$. In both cases the inferred dual GSO, shown in the middle column of Fig.~\ref{fig:noiseless}, correctly reveals the structure of the true dual GSO $\bbS_f$, shown in the right column; this even though the eigenvalue reconstruction (left column) is perfect only in the first case. In other words, even with an inexact (but not random) reconstruction of the eigenvalues, the algorithm seems to adequately capture the connections present in the true GSO $\bbS_f$. The reason behind the difference in error estimation among the two cases is mainly due to the high polynomial degree $K$, which on the analytic side renders the objective function~\eqref{eq:equivalent-subspace-fitting} highly non-convex (and hence easier for the algorithm to end up in a local minimum); on the algebraic side it increases the numerical instability of performing the pseudoinverse of the matrix $\bbA$ required for a correct estimation of the filter parameter matrix $\bbP$. Hence, even a perfect dual graph frequency estimation step fitting perfectly the estimated $\bbP$, might fail to perfectly reconstruct the true eigenvalues $\bblambda_f$. For $\delta=10$, the algorithm improves the PNE by two orders of magnitude and the inferred eigenvalues $\Tilde{\bblambda}_f$ nearly overlap the true ones $\bblambda_f$.

\sn
For the noisy scenario, in Fig.~\ref{fig:noise-case} we show the results obtained by considering the two cases described above (i.e. $L=K=3$ and $L=K=9$ with $\delta=1000$), but with a measurement noise having $\sigma=50$. In this case the NSE is $8.88 \times 10^{-6}$ and the PNE is $8.72 \times 10^{-5}$, while for the latter the NSE is $9.74 \times 10^{-16}$ and the PNE is $2.00 \times 10^{-1}$. The eigenvalues and graph reconstruction is successful despite the fact that the reconstruction of the filter taps is not perfect as in the noiseless case, which gives us hope for the robustness of the algorithm when measuring noisy data.

\sn
Overall we can state that from an algorithmic point of view, the algorithm is robust in presence of a considerable eigenvalue perturbation $\delta$ and noise $\sigma$, especially for low polynomial degree. In instances where the polynomial degree is high, the inherent non-convex nature of the problem introduces substantial complexity into the optimization process, which however leads to a dual graph resembling the original one.  the  In addition, we observe that when the NSE is high, meaning that  $\bbP$  has not been properly reconstructed, the  PNE is also usually high, which is somehow expected since we rely on $\bbP$ to estimate the dual eigenvalues $\bblambda_f$.

\begin{figure*}
\begin{subfigure}{0.33\textwidth}
    \centering
\includegraphics[scale=0.45]{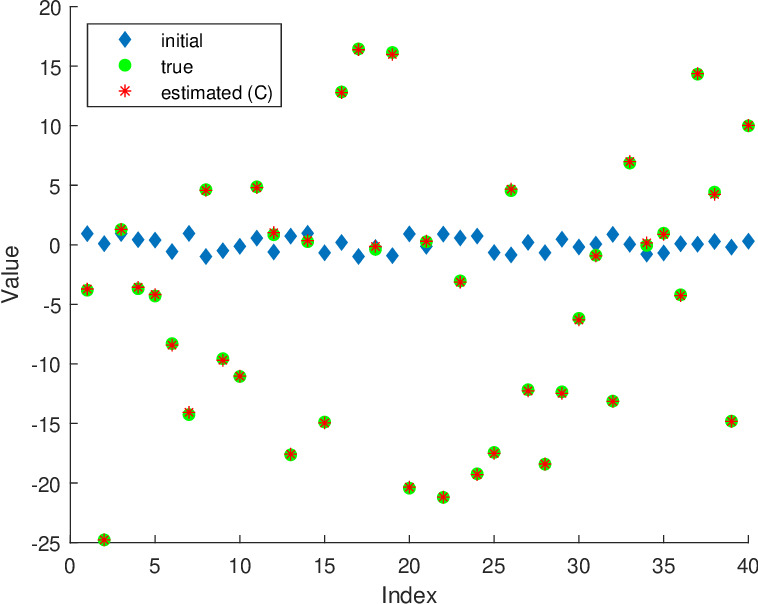}
\end{subfigure}%
\begin{subfigure}{0.33\textwidth}
    \centering
\includegraphics[scale=0.45, trim= 0 -.5cm 0cm 0cm, clip=false]{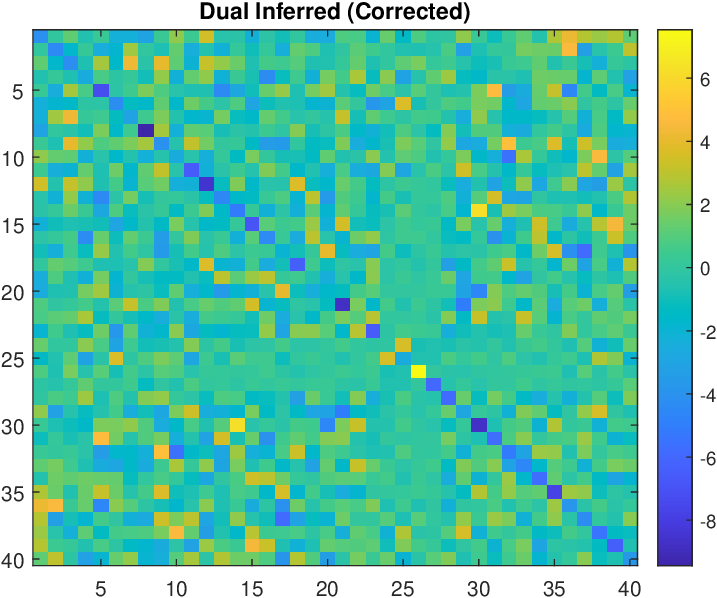} 
\end{subfigure}%
\begin{subfigure}{0.33\textwidth}
\centering
\includegraphics[scale=0.45, trim= 0 -.5cm 0cm 0cm, clip=false]{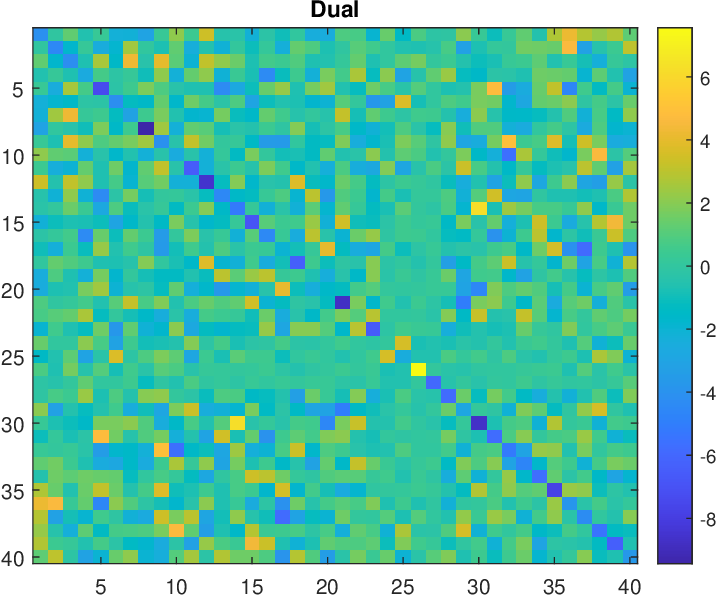}
\end{subfigure}%
\\
    \begin{subfigure}{0.33\textwidth}
        \includegraphics[scale=0.45]{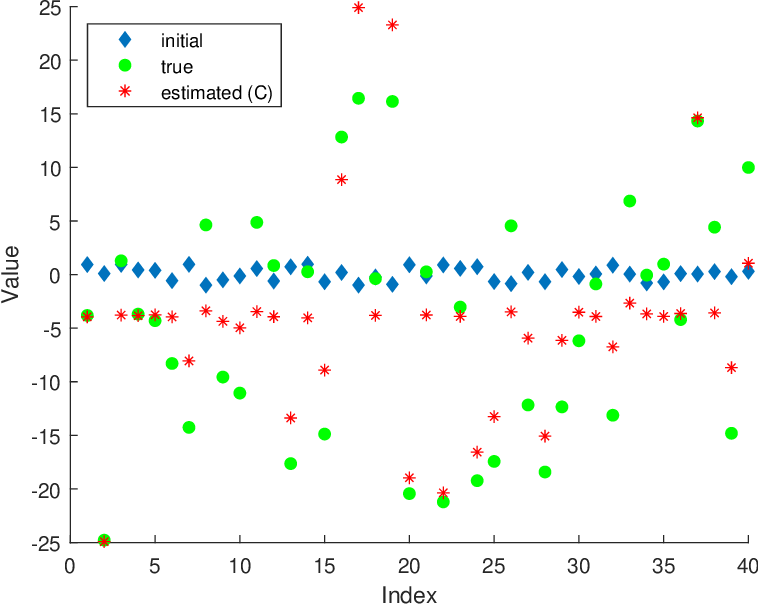}
    \end{subfigure}
       \begin{subfigure}{0.33\textwidth}
        \includegraphics[scale=0.45,trim= 0 -.5cm 0cm 0cm, clip=false]{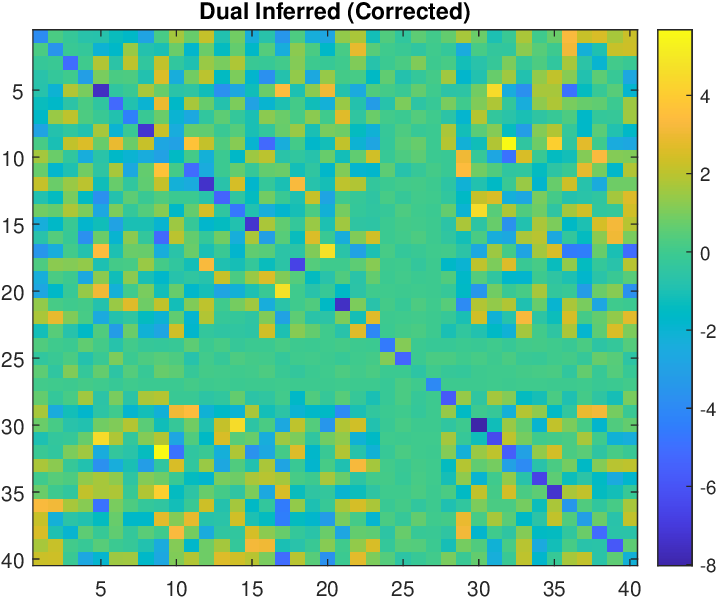}
    \end{subfigure}
       \begin{subfigure}{0.33\textwidth}
        \includegraphics[scale=0.45, trim= 0 -.5cm 0cm 0cm, clip=false]{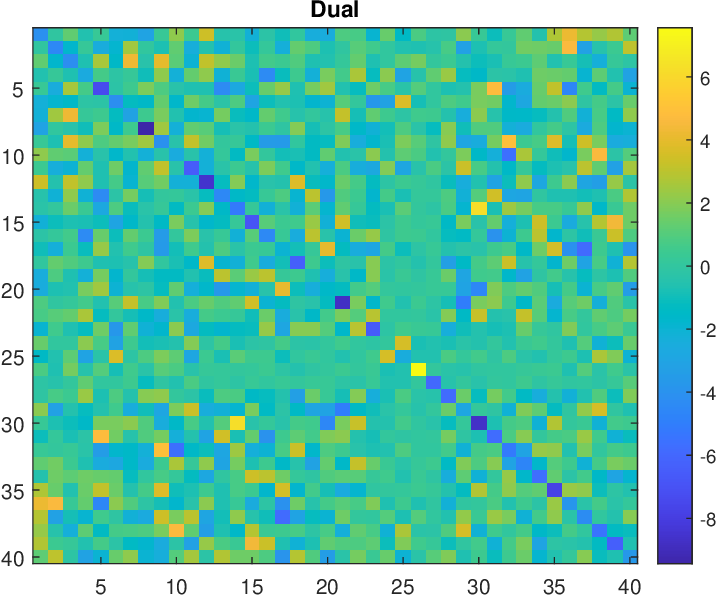}
    \end{subfigure}
    \vspace{-.5cm}
    \caption{ (Noisy case $\sigma=50$) Results for $L=K=3$ (top row) and $L=K=9$ (bottom row) with $\delta = 1 \times 10^3$. (Left Column) True eigenvalues (green circles), inferred eigenvalues (red crosses) and initial starting point of the algorithm (blue
diamonds;  (Center Column) Inferred dual GSO with ambiguity-correction; (Right Column) True dual GSO $\bbS_f$.}
\label{fig:noise-case}
\end{figure*}

%
%

\vspace{-.2cm}
\subsection{Real Data}
In this section we exploit the theory developed in Section~\ref{sec:theorem} and Section~\ref{sec:learning} to infer a dual graph from real data. To evaluate the stationarity level of the given data $\bbY$ for a given GSO $\bbS$, we use the proxy-measure $\rho = \|\diag(\bbV^{-1} \bbC_y \bbV)\|_2^2 / \|\bbV^{-1} \bbC_y \bbV\|_F^2$, which measures the  ``diagonal dominance" of the spectral covariance matrix; a value of $1$ indicates that the data are stationary.

\sn
As performance metrics we monitor two errors: \textit{i)} the NSE($\bbY, \bbH_I(\Tilde{\bbP}, \bbS)\bbX$) for the input-output case [cf.~\ref{subsec:learn-P}] or the NSE($\bbR, \bbH_I(\tilde{\bbP}, \bbS)\tilde{\bbU}$) for the output-only case [cf.~\ref{subsec:learn-P-output-only}] (recall the definition of the NSE in \eqref{eq:NSE});
 and  \textit{ii)} the ``corollary error'' [cf. Corollary~\ref{corollary: categoric} in ~\eqref{eq: categoric}]:
\begin{align}
\label{eq:corollary-error}
    \varepsilon_c = \frac{\|\bbV^{-1} \bbH_I(\tilde{\bbP}, \bbS) - \bbH_{II}(\tilde{\hat{\bbP}}, \bbS_f)\bbV^{-1}\|_F^2}{\|\bbV^{-1} \bbH_I(\tilde{\bbP}, \bbS)\|_F^2}
\end{align}
which assesses whether the inferred $\bbS_f$ is a ``valid'' dual graph consistent with our theorem,
i.e., how much  the upper and lower branches of Fig.~\ref{fig:duality} diverge from each other. To make things more clear, the $\tilde{\bbP}$ refers to the estimation of the primal filter tap matrix $\bbP$ either following~\eqref{eq:pseudo-inverse} or~\eqref{eq:step-A}, while $\tilde{\hat{\bbP}}$ refers to the estimation of the dual filter tap matrix $\hat{\bbP}$, which is here computed as $\tilde{\hat{\bbP}} = \bbPsi^\dagger \tilde{\bbC}^\top$, with $\tilde{\bbC} = \bbPsi_f(\tilde{\bblambda}_f)^\dagger \tilde{\bbP}$, where $\tilde{\bblambda}_f$ is the estimation of the dual eigenvalues $\bblambda_f$ solving~\eqref{eq:equivalent-subspace-fitting}.
All in all,  error \textit{i)} concerns the graph filter estimation, while \textit{ii)} concerns the dual graph estimation.

\subsubsection{Traffic Volume} we consider a subsampled version of the open  dataset in~\cite{zhao2019spatial} which contains $T=1259$  traffic volume measurements at intervals of $15$ minutes at $N=13$ sensor locations along two major highways in  Northern Virginia/Washington, D.C.; in addition, the physical (road) network is available, see Fig.~\ref{fig:road}(left).  We denote with $\bbS$ the adjacency matrix representing the given road network, and with $\bbY \in \reals^{N \times T}$  the (centered) graph signals corresponding to the traffic volume measurements. These signals exhibit a non-stationary behavior captured by $\rho = 0.54$.

\begin{figure}[h]
    \centering
    \begin{subfigure}{0.25\textwidth}%
    \includegraphics[scale=0.4, trim = 2.6cm 2cm 1.6cm 0.9cm, clip= true]{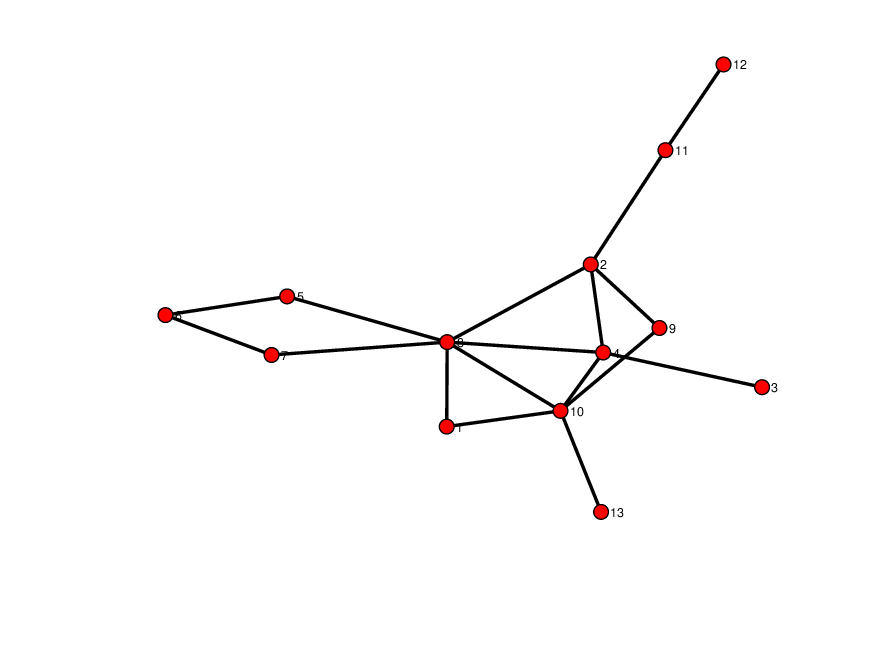}%
    \end{subfigure}%
    \begin{subfigure}{0.25\textwidth}%
    \vspace{.2cm}
    \includegraphics[scale=0.4]{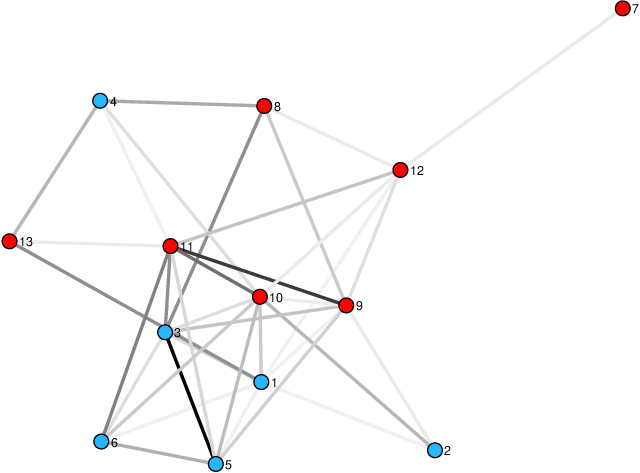}%
    \end{subfigure}%
        \vspace{-.2cm}
    \caption{(Left): primal graph of the road network; (Right) dual graph of the road network, where each node represents a graph frequency. Red (blue) nodes are the standard low  (high) frequencies.}
    \label{fig:road}
\end{figure}

In our experiment, we explore both input-output and output-only scenarios. 
In the former scenario, we define the input matrix $\bbX$ by aligning it with $\bbY$, shifting each column two positions to the left; consequently, the learning task revolves around forecasting the traffic volume $30$ minutes ahead. We initialize the parameters with $L=K=3$ and execute the algorithm. Notice that our data was deliberately not partitioned into training, validation, and test sets, since our primary focus is optimizing the fitting of our graph filter to the provided data, rather than evaluating its forecasting performance.

\vspace{-.3cm}
\sn
First, we learn the filter coefficients $\bbP$ for both scenarios, yielding a NSE equal to $8 \times 10^{-2}$ for both. Subsequently, we use the inferred $\tilde{\bbP}$ to learn the dual eigenvalues $\bblambda_f$ by solving~\eqref{eq:equivalent-subspace-fitting}. The associated dual GSO $\bbS_f = \bbV^{-1} \Diag(\tilde{\bblambda}_f) \bbV$, achieving a corollary error [cf.~\eqref{eq:corollary-error}] $\varepsilon_c = 1 \times 10^{-1}$ for scenario 1 and $\varepsilon_c = 3 \times 10^{-1}$ for scenario 2, is shown in Fig.~\ref{fig:road}(right), where we only display the $50\%$ biggest edges (in absolute values) to ease the visualization. We color with blue the first half of the nodes, representing what are usually considered ``high pass'' frequencies, and with red the second half of the nodes, representing what are usually considered the  ``low pass'' frequencies (remember that $\bbS$ is the adjacency matrix). Adjacent nodes in the graph are not necessarily among consecutive graph frequencies, as it is commonly assumed in  GSP. This shows that the frequency ordering that is commonly assumed does not fit our theory and we might obtain a more expressive way to embed the different graph frequencies.

\subsubsection{MNIST Dataset}
We consider the MNIST dataset of gray-scale handwritten digits from $0$ to $9$, focusing on the digit $5$\footnote{Similar results can be obtained with the other digits, see~\url{https://github.com/albertonat/genConv}} containing $T = 5949$ images of size $18 \times 18$. We model each pixel of the image as a node of the primal graph $\bbS$ and its pixel intensity as the graph signal value at that node. As a preprocessing step, we remove the mean from each pixel and vectorize the images, thus obtaining a graph signal matrix $\bbY \in \reals^{N \times T}$, with $N = 324$. As GSO $\bbS$, we consider the normalized Laplacian matrix of the  $18 \times 18$ grid graph, for which the $5$ exhibits a non-stationary behavior, since $\rho = 0.4$.

\sn
We assume that each $\bby_t$ is the output of a NV-GF $\bbH_I(\bbP, \bbS)$ when excited with a white input $\bbx_t \sim \ccalN(\bb0, \bbI)$; our goal is then to learn the filter taps $\bbP$ and subsequently the dual eigenvalues $\bblambda_f$ from the available data $\bbY$. In this particular scenario, it is important to highlight that our access is limited to the output data $\bbY$, since the corresponding noise input $\bbX$ is not available. Consequently, the sole viable approach becomes the output-only procedure [cf. Section~\ref{subsec:learn-P-output-only}]. Nevertheless, to navigate this limitation and to be also able  to use the input-output approach of Section~\ref{subsec:learn-P}, we can employ the ensuing rationale to derive pairs $(\bbx_t, \bby_t)$:

\begin{enumerate}
    \item[(a)] compute the covariance matrix $\bbC_y = \bbY\bbY^\top 
    / T$ and decompose it as $\bbC_y = \bbR\bbR^\top$;

    \item[(b)] filter 
 a white input signal $\bbx_t$ to generate a new signal $\bby_t$ as $\bby_t = \bbR \bbx_t$.
\end{enumerate}
It follows that the $\bby_t$ vector  generated in this way follow the same distribution of the original $\bbY$ and still represent the digit $5$ (up to a sign ambiguity). This time, however, we have the associated input $\bbX$. We run the proposed algorithm  for different orders $L$ and $K$,  with and without input $\bbX$.

\sn 
 In Fig.~\ref{fig: MNIST}(left) we show the inferred filter taps $\tilde{\bbP}$, for $L=4$, corresponding to the solution of problem~\eqref{eq:main-problem-approximated} and yielding a $\text{NSE}= 3 \times 10^{-2}$; each filter tap $\{\bbp_l\}_{l=0}^3$ has been reshaped to have the same size of the input image and stacked in a row-wise fashion. It is interesting to notice how each $\bbp_l$ has a digit-shape look, with a decreasing pixel intensity for increasing filter tap order $L$; this indicates pixel-locality as an important factor for the creation of the final pixel intensity. Since each (reshaped) filter tap image in Fig~\ref{fig: MNIST} pointwise-multiplies a shifted white input (noise) image of same size (to be finally aggregated), it is visible how the most influential filter content gathers around the digit shape for the digit-formation, and decreases its importance for higher shifts, meaning that the process is local on the pixel and there are no long term-influences. The same NSE and $\bbP$-profile is obtained with the input-output approach.

 \begin{figure*}
    \begin{subfigure}{0.33\textwidth}
        \includegraphics[scale=0.6, trim = 3cm 1cm 3cm 1cm, clip=true]{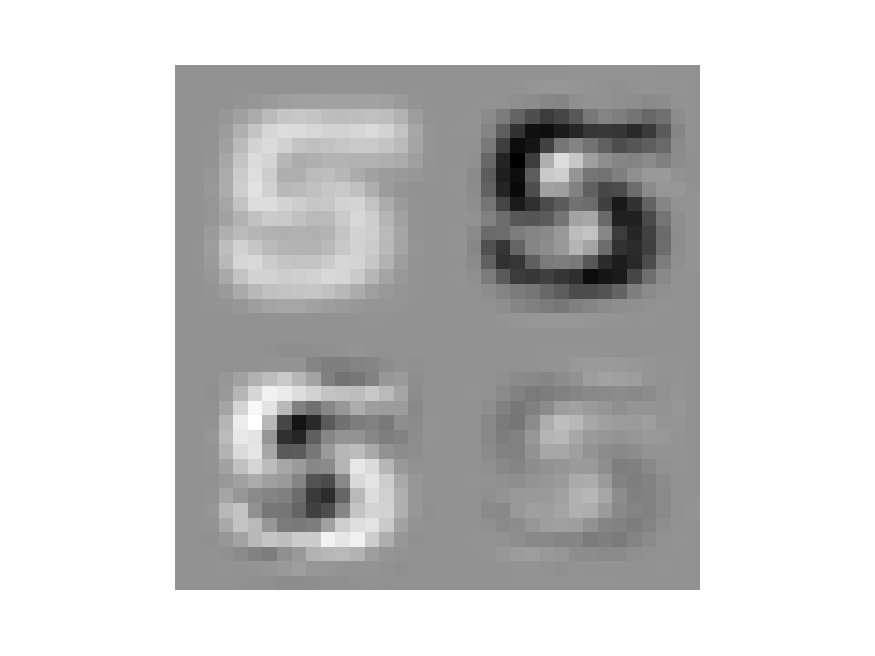}
    \end{subfigure}%
    \begin{subfigure}{0.66\textwidth}
        \includegraphics[scale=0.4,  trim=6cm 3.5cm 5cm 2cm, clip=true]{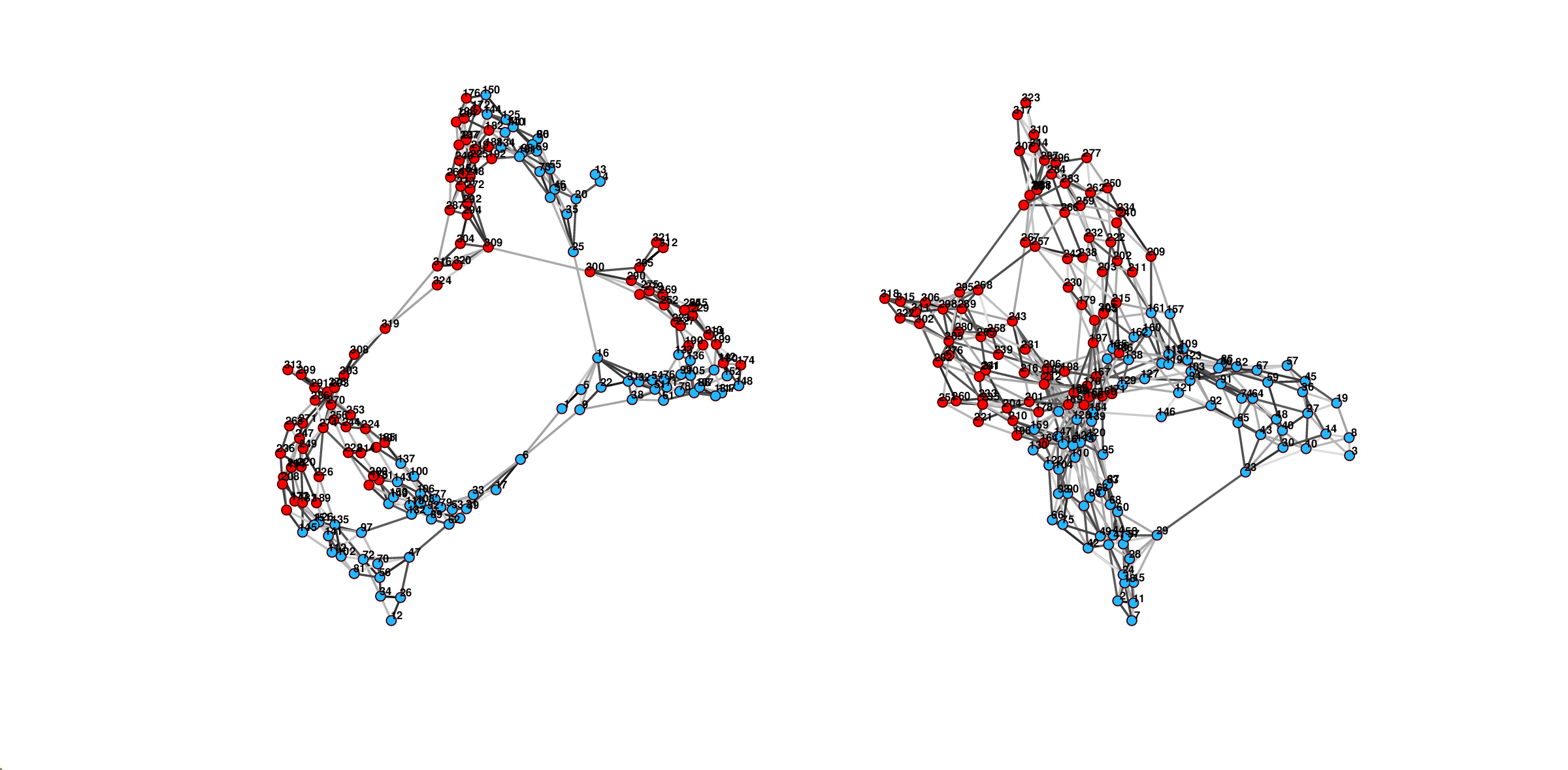}
    \end{subfigure}%
    \vspace{-.2cm}
    \caption{ (Left) Illustration of each filter tap $\{\bbp_l\}_{l=0}^3$, following a row-wise stacking, reshaped to have the same shape of the input image. Increasing $l$ renders the entries of $\bbp_l$ less influential for the construction of the final digit; (Right) Dual Graph of the $5$-digit. The label indicates the index of the eigenvalues $\lambda_i$, and the node color indicates whether it is part of the first half (blue) or the second half (red), commonly associated to the low and high pass bands, respectively. Notice that the graph is connected but due to the edge thresholding (only for the visualization) it is split in two.}
    \label{fig: MNIST}
\end{figure*}

\sn 
Once we have the filter tap matrix $\tilde{\bbP}$, we then learn the dual eigenvalues $\bblambda_f$ 
 by solving~\eqref{eq:equivalent-subspace-fitting}. The associated dual GSO $\bbS_f = \bbV^{-1} \Diag(\tilde{\bblambda}_f) \bbV$, achieving a corollary error [cf.~\eqref{eq:corollary-error}] $\varepsilon_c = 9 \times 10^{-2}$, is shown in Fig.~\ref{fig: MNIST}(right), where we only display $2\%$ of the most significant edges in absolute values\footnote{For a graph with $N=324$ nodes there would be more than $50$k edges possible.}. The node label $i$ indicates the index associated to $\lambda_i$. As in the previous experiment, we color the first half of the nodes (representing now the ``low frequency'' eigenvalues) with blue, and the other half with red. 
 Similar results and conclusions can be made by following the input-output approach, where we obtain $\varepsilon_c = 3 \times 10^{-2}$ and the graph is similarly structured as the one in Fig.~\ref{fig: MNIST} (see Supplementary Material).

 \sn
 Together with the previous experiment, the following observations can be made: \begin{itemize}
     \item The connections between the eigenvalues do not follow the linear ordering as assumed by the traditional real-line interpretation (in that case, we would only have red-red and blue-blue connections without interactions); this has consequences when designing graph filters based solely on the value of the $\lambda_i$, since now the concept of ``bands'' needs also to account for the topology.

     \item  Because the dual graph represents the support of the GFT signals, we can now inspect which neighborhood is influential for a particular frequency during a convolution on $\bbS_f$; this was not possible with the standard real-line interpretation, as the convolution operation was a simple pointwise multiplication.

 \end{itemize}

All in all, these results confirm the commutative nature of the two branches of Fig.~\ref{fig:duality}, thus rendering the dual convolution  a preferred approach when $K<L$, or when $\bbS_f$ and/or the GFT signals exhibit sparsity, in addition to delivering an elegant theoretical framework.


\section{Conclusions}
\label{sec:conclusions}

In this work we proposed a convolution theorem which extends the classical convolution theorem in (graph) signal processing and the one related to time-varying filters. More precisely, we illustrate how a convolution in the primal graph domain can be redefined as a distinct convolution in the dual graph (frequency) domain, given a suitable filter parametrization. After illustrating the implications of such theorem in terms of non-stationarity of signals, and generative models thereof, we devise an algorithmic approach based on subspace fitting and non-convex programming techniques to learn the dual graph from data when this is not a priori known. We evaluated the proposed theory and algorithms on synthetic data, as well as on real data.

While our current theoretical framework holds promise for practical applications in the future, there are notable challenges that merit further exploration. A significant gap lies in the absence of a one-shot procedure to construct the dual GSO directly from a primal GSO, along with potential graph signals associated with it. This limitation curtails the broader applicability of the proposed theoretical insights. Nonetheless, multiple extensions of this work are possible. From an algebraic point of view, an interesting line of research would involve exploring the connections between Vandermonde and Hankel matrices, as well as with Krylov subspaces, potentially unveiling  new algorithmic solutions to learn the dual eigenvalues. From a modeling point of view, an interesting extension of this work would include the node-varying graph filter coefficients also to be time-varying; holding promise for utilization in graph autoregressive models. In such cases, leveraging the basis expansion model technique across the temporal dimension becomes a potential avenue for further investigation. From an optimization point of view, the use of orthogonal polynomials might alleviate the ill-conditioning of the Vandermonde matrix.

Our hope is that in the coming years, further exploration and refinement of this research direction will reveal new insights and methodologies to process signals defined on graphs in a way previously unfeasible.




 
%

\bibliographystyle{IEEEtran}
\bibliography{refs}
\newpage

 




\vfill

\end{document}